\newtheorem*{Theorem}{Theorem}
\newtheorem{Definition}{Definition}
\newtheorem{Proposition}{Proposition}
\newtheorem{Example}{Example}
\newtheorem{Remark}{Remark}
 \newtheorem*{Corollary}{Corollary}
\newcommand{\tr}{{\rm Tr\hskip -0.2em}~}
\newcommand{\mean}{{\rm E}}
\newcommand{\Cov}{{\rm Cov}}
\newcommand{\Var}{{\rm Var}}
\DeclareMathOperator{\frechetdiff}{\mathit d}
\newcommand{\fd}[1]{\hskip-0.2em\frechetdiff\hskip -0.23em{#1}}
\begin{document}
\title{A unified approach to Local Quantum Uncertainty and Interferometric Power by Metric Adjusted Skew Information}

\author{Paolo Gibilisco}
\affiliation{$\hbox{Department of Economics and Finance, University of Rome “Tor Vergata",
Via Columbia 2, Rome 00133, Italy.}$}
\email{$\hbox{paolo.gibilisco@uniroma2.it}$}

\author{Davide Girolami}
\email{davegirolami@gmail.com}
\affiliation{$\hbox{Politecnico di Torino, Corso Duca degli Abruzzi 24, Torino 129, Italy}$}
 
\author{Frank Hansen}
\affiliation{$\hbox{Department of Mathematical Sciences, University of Copenaghen, Universitetsparken 5 DK-2100 Copenhagen, Denmark}$}
\email{$\hbox{frank.hansen@econ.ku.dk}$}

\begin{abstract}{Local quantum uncertainty and interferometric power have been introduced by Girolami {\it et al.} in \cite{GTA:2013,GSGTFSSOA:2014} as geometric quantifiers of quantum correlations. The aim of the present paper is to discuss their properties in a unified manner by means of the the metric adjusted skew information defined by Hansen in \cite{Hansen:2006b}.
}
\end{abstract}

\date{\today}

 
  \maketitle
 
\section{Introduction}

One of the key traits of many-body quantum systems is that the full knowledge of their global configurations does not imply full knowledge of their constituents. The impossibility to reconstruct the local wave functions $|\psi_{1}\rangle,\,|\psi_2\rangle$ (pure states) of two interacting quantum particles from the  wave function of the whole system, $|\psi_{12}\rangle\neq|\psi_{1}\rangle\otimes |\psi_{2}\rangle$, is due to the existence of entanglement \cite{ent}. Investigating open quantum systems, whose (mixed) states are described by density matrices $\rho_{12}=\sum_ip_i|\psi_{i}\rangle_{12}\langle\psi_{i}|$, revealed that the boundary between the classical and quantum worlds is more blurred than we thought. There exists a genuinely quantum kind of  correlation, quantum discord,  which manifests even in absence of entanglement, i.e. in separable density matrices $\rho_{12}=\sum_i p_i \rho_{1,i}\otimes \rho_{2,i}$ \cite{oz,hv}. The discovery triggered theoretical and experimental studies to understand the phy\-sical meaning of quantum discord, and the potential use of  it as a resource for quantum technologies \cite{rev}. Relying on the known interplay between geometrical and physical properties of mixed states \cite{Uhlmann:1992,book}, a stream of works employed information geometry techniques to construct  quantifiers of quantum discord \cite{AyGibiliscoMatus: 2018,BogaertGirolami:2017,CFTA:2018,GII:2009,FPA:2017, GibiliscoIsola:2011}.  In particular, two of the most popular ones are the Local Quantum Uncertainty (LQU) and the Interferometric Power (IP) \cite{GTA:2013,GSGTFSSOA:2014}. A merit of these two measures is that they admit an analytical form for $N$ qubit states across the $1 \, vs\, N-1$ qubit partition. Also, they have a clear-cut physical interpretation. The lack of certainty about quantum measurement outcomes is due to the fact that density matrices are changed by quantum operations. The LQU evaluates the minimum uncertainty about the outcome of a local quantum measurement, when performed on a bipartite system.   It is proven that two-particle density matrices display quantum discord if and only if they are  not ``classical-quantum'' states. That is, they are not  (mixture of) eigenvalues of local observables, $\rho_{12}\neq \sum_ip_i|i\rangle_1\langle i|\otimes \rho_{2,i}$,  or $\rho_{12}\neq \sum_ip_i\rho_{1,i} \otimes |i\rangle_2\langle i|$, in which $\{|i\rangle\}$ is an orthonormal basis. Indeed, this is   the only case in which one can identify a local measurement that does not change a bipartite quantum state, whose spectral decomposition reads $A_1=\sum_i\lambda_i|i\rangle_1\langle i|$, or $A_2=\sum_i\lambda_i|i\rangle_2\langle i|$.  
The LQU was built as the minimum of the Wigner-Yanase skew information,  a well-known information geometry measure \cite{WignerYanase:1963},  between a density matrix and a finite-dimensional observable (Hermitian operator). It quantifies how much a density matrix $\rho_{12}$ is different from being a zero-discord state. The IP was concocted by following a similar line of thinking.  Quantum discord implies a non-classical sensitivity to local perturbations.  This feature of quantum particles, while apparently a limitation, translates into an advantage in the context of quantum metrology \cite{metro}. It was theoretically proven and experimentally demonstrated that quantum systems sharing quantum discord are more sensitive probes for interferometric phase estimation. The figure of merit of such measurement protocols is the quantum Fisher information of the state under scrutiny with respect to a local Hamiltonian (in Information Geometry the QFI is known as the SLD or Bures-Uhlmann metric).
The latter generates a unitary evolution  that imprints information about a physical parameter on the quantum probe. The IP is the minimum quantum Fisher information over all the possible local Hamiltonians, being zero if and only if the probe states are classically correlated. \\

\noindent Here, we polish and extend the mathematical formalization of information-geome\-tric quantum correlation measures. We build a class of parent quantities of the LQU (and consequently of the IP) in terms of the the metric adjusted skew informations \cite{Hansen:2006b}. In Sections 2,3, we review definition and main properties of operator means. In Sections 4-6, we discuss information-geometric quantities that capture complementarity between quantum states and observables. In particular, we focus on the quantum $f$-covariances and the quantum Fisher information. They quantify the inherent uncertainty about quantum measurement outcomes. After having recalled the definition of metric adjusted skew information (Section 7), we build a new  quantum discord measure, the metric adjusted local quantum uncertainty ($f$-LQU), in Section 8. Finally we are able to show that LQU and IP are just two particular members of this family allowing a unified treatment of their fundamental properties.

\section{Means for positive numbers}\label{commutativemeans}

We use the notation ${\mathbb R}_+= (0,+\infty)$. 

\begin{Definition} \label{numbermean}

A bivariate {\sl mean} \cite{PetzTemesi:2005}  is a function $m\colon{\mathbb R}_+ \times {\mathbb R}_+ \to{\mathbb R}_+$  such that

\begin{enumerate}

\item $m(x,x)=x.$ 

\item $m(x,y)=m(y,x).$ 

\item $x <y  $ $\,\Rightarrow\,$ $ x<m(x,y)<y.$ 

\item $x<x' $ and $ y<y' $ $\,\Rightarrow\,$ $ m(x,y)<m(x',y'). $ 

\item $m$ is continuous.

\item $ m $ is positively homogeneous; that is $m(tx,ty)=t \cdot m(x,y)$ for $ t>0. $

\end{enumerate}

\end{Definition}

We use the notation $ \mathcal M_{num} $ for the set of bivariate means described above.

\begin{Definition}
Let $ \mathcal F_{num} $ denote the class of functions $f\colon\mathbb R_+ \to\mathbb R_+$ such that 

\begin{enumerate}

\item $f$ is continuous.

\item $f$ is monotone increasing.

\item $f(1)=1.$

\item $tf(t^{-1})=f(t)$.

\end{enumerate}
\end{Definition}

The following result is straightforward.

\begin{Proposition}
There is a bijection $ f\mapsto m_f $ betwen  ${\mathcal F}_{nu}$  and ${\mathcal M}_{nu}$ given by 
\[
m_f(x,y)=yf(y^{-1}x)\qquad\text{and in reverse}\qquad f(t)=m(1,t)
\]
for positive numbers $ x,y $ and $ t. $
\end{Proposition}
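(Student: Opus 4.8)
The plan is to show that the two assignments $f\mapsto m_f$ and $m\mapsto m(1,\cdot)$ are each well defined (that they land in $\mathcal{M}_{num}$ and $\mathcal{F}_{num}$ respectively) and that they are mutually inverse. The mutual-inverse part is the conceptual core and is immediate: starting from $f\in\mathcal{F}_{num}$ one computes
\[
m_f(1,t)=t\,f(t^{-1})=f(t),
\]
using property (4) of $\mathcal{F}_{num}$; and conversely, for a mean $m$ with $f(t)=m(1,t)$, positive homogeneity and symmetry give $m_f(x,y)=y\,m(1,y^{-1}x)=m(y,x)=m(x,y)$. Hence the genuine work lies only in checking that each assignment lands in the correct class, i.e. in matching the six axioms of a mean against the four defining properties of $f$.

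First I would fix $f\in\mathcal{F}_{num}$ and verify that $m_f(x,y)=y f(y^{-1}x)$ satisfies the six axioms, pairing each axiom with the property of $f$ that produces it. Normalization $m_f(x,x)=x f(1)=x$ is exactly $f(1)=1$; positive homogeneity and continuity are visible directly in the formula $y f(y^{-1}x)$. Symmetry is precisely the reflection identity $t f(t^{-1})=f(t)$, since with $t=y^{-1}x$ one gets $m_f(y,x)=x f(x^{-1}y)=yt\,f(t^{-1})=y f(t)=m_f(x,y)$. Betweenness combines three properties: for $t=x/y\in(0,1)$ monotonicity gives $f(t)\le f(1)=1$, so $m_f(x,y)\le y$, while $f(t)=t f(t^{-1})\ge t$ (because $f(t^{-1})\ge f(1)=1$) gives $m_f(x,y)\ge x$. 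Finally, monotonicity of the mean reduces, via the symmetry just established, to monotonicity in a single argument: $m_f(x,y)=y f(x/y)$ is increasing in $x$ because $f$ is, and then $m_f(x,y)=x f(y/x)$ is increasing in $y$, so one chains $m_f(x,y)\le m_f(x',y)\le m_f(x',y')$.

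In the reverse direction I would set $f(t)=m(1,t)$ and check the four properties of $\mathcal{F}_{num}$. Continuity and $f(1)=m(1,1)=1$ are immediate, and the reflection identity drops out of homogeneity and symmetry: $t f(t^{-1})=t\,m(1,t^{-1})=m(t,1)=m(1,t)=f(t)$. The property I expect to be the main obstacle is the monotonicity of $f$, because axiom (4) of a mean only asserts growth when \emph{both} arguments increase and so does not by itself deliver monotonicity in a single variable. I would first extract coordinatewise monotonicity in non-strict form by a limiting argument — for small $\varepsilon>0$, $m(1-\varepsilon,t)<m(1,t')$ whenever $t<t'$, and letting $\varepsilon\to0$ with continuity gives $m(1,t)\le m(1,t')$ — and then confront the delicate question of strictness, which is where one must track strict versus non-strict inequalities consistently across the two axiom systems. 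Through the reflection identity, a flat stretch of $f$ on an interval $[a,b]\subset(0,1)$ corresponds to a linear stretch on the reciprocal interval $[b^{-1},a^{-1}]$, so it is precisely the interplay of monotonicity and betweenness on reciprocal arguments that has to be analyzed to pin down the correspondence of the monotonicity conditions. Once this matching is settled, both maps are well defined and the mutual-inverse identities of the first paragraph complete the proof.
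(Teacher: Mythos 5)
The paper offers no argument for this Proposition at all (it is introduced with ``The following result is straightforward''), so the only question is whether your verification is complete --- and it is not: you explicitly park the one point you yourself call ``the main obstacle,'' namely strict monotonicity of $f(t)=m(1,t)$, behind the phrase ``once this matching is settled,'' and the matching is never carried out. Your mutual-inverse computations, the forward checks (normalization, homogeneity, continuity, symmetry via the reflection identity, and strict betweenness and strict joint monotonicity \emph{when} $f$ is strictly increasing), and the limiting argument giving non-strict monotonicity of $m(1,\cdot)$ are all fine. But the deferred step is missing for a good reason: it cannot be done with the classes exactly as the paper defines them. The mean axioms do not force $m(1,\cdot)$ to be strictly increasing. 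Concretely, let $f(t)=(1+t)/2$ on $[1,2]$, $f(t)=3/2$ on $[2,3]$, $f(t)=t/2$ on $[3,\infty)$, extended to $(0,1)$ by $f(t)=tf(1/t)$, and put $m(x,y)=yf(x/y)$. Then $f$ is continuous, non-decreasing, and $1<f(t)<t$ strictly for $t>1$; symmetry and homogeneity hold by construction; strict betweenness holds because $1<f(t)<t$; and strict joint monotonicity holds for \emph{any} positive non-decreasing $f$ with the reflection property, by a two-case argument: if $x/y\le x'/y'$ then $m(x,y)=yf(x/y)\le yf(x'/y')<y'f(x'/y')=m(x',y')$, and otherwise $y/x<y'/x'$, so $m(x,y)=xf(y/x)\le xf(y'/x')<x'f(y'/x')=m(x',y')$. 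Hence $m\in\mathcal M_{num}$, yet $m(1,\cdot)$ is constant on $[2,3]$.

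So the strictness bookkeeping you postponed is genuinely obstructed: if ``monotone increasing'' in the definition of $\mathcal F_{num}$ is read strictly, the inverse map $m\mapsto m(1,\cdot)$ fails to land in $\mathcal F_{num}$; if it is read non-strictly, the forward map fails instead, since $f(t)=\max(1,t)$ satisfies all four conditions but yields $m_f(x,y)=\max(x,y)$, which violates strict betweenness. The statement your argument actually proves, once completed, is a bijection between $\mathcal M_{num}$ and the class of continuous, non-decreasing, normalized, symmetric functions satisfying $t<f(t)<1$ on $(0,1)$ --- the exact translate of axiom (3) --- which is neither of the two readings of $\mathcal F_{num}$. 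Your instinct about where the difficulty sits (flat stretches of $f$ versus linear stretches on the reciprocal interval) was exactly right; the gap is that you stopped at identifying it, when the honest resolution is not finer tracking of inequalities but an adjustment of the function class (a point the paper itself glosses over by declaring the result straightforward).
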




\smallskip

In Table 1 we have some examples of means.

\begin{table}[ht]
\caption{}\label{eqtable}
\renewcommand\arraystretch{2.5}
\renewcommand{\tabcolsep}{8pt}
\noindent\[
\begin{array}{|c|c|c|}
\hline
{\rm Name} & { f} & {m_f}\\
\hline 
{\rm arithmetic} & \displaystyle\frac{1+x}{2} & \displaystyle\frac{x+y}{2}\\
\hline  
{\rm WYD}, \beta\in(0,1)  & \displaystyle\frac{x^{\beta}+x^{1-\beta}}{2} & \displaystyle\frac{x^{\beta}y^{1-\beta}+x^{1-\beta}y^{\beta}}{2} \\
\hline
{\rm geometric} & \sqrt{x} & \sqrt{xy} \\
\hline
{\rm harmonic} & \displaystyle\frac{2x}{x+1} & \displaystyle\frac{2}{x^{-1}+y^{-1}} \\
\hline
{\rm logarithmic} & \displaystyle\frac{x-1}{\log x} & \displaystyle\frac{x-y}{\log x - \log y} \\
\hline
\end{array}
\]
\end{table}



\section{Means for positive operators in the sense of Kubo-Ando} \label{KuboAndomeans}

The celebrated Kubo-Ando theory of operator means 
\cite{KuboAndo79/80, PetzTemesi:2005,GibiliscoHansenIsola:2009} may be viewed as the operator version of the results of Section \ref{commutativemeans}.


\begin{Definition}
A bivariate {\sl mean} $ m $ for pairs of positive operators is a function
\[
(A,B)\to m(A,B),
\]
defined in and with values in positive definite operators on a Hilbert space, that satisfies, mutatis mutandis, conditions $(1)$ to $(5)$ in Definition~\ref{numbermean}. In addition, the
{\sl transformer inequality}
\[
Cm(A,B)C^* \leq m(CAC^*,CBC^*),
\]
should also hold for positive definite $ A, B $ and arbitrary $ C. $ 
\end{Definition}

Note that the transformer inequality replaces condition $ (6) $ in Definition~\ref{numbermean}.
We denote by $\displaystyle {\mathcal M}_{op}$ the set of matrix means.

\begin{Example}
The  arithmetic, geometric and harmonic operator means are defined, respectively, by setting
\[
\begin{array}{rcl}
A \nabla B&=&\frac{1}{2}(A+B)\\[1.5ex]
A\# B&=&A^{1/2}\bigl(A^{-1/2} B A^{-1/2}\bigr)^{1/2}A^{1/2}\\[2ex]
A{\rm !}B&=&2(A^{-1}+B^{-1})^{-1}. 
\end{array}
\]
\end{Example}

We recall that a function $f\colon(0,\infty)\to \mathbb{R}$ is said to be 
{\it operator monotone (increasing)} if
\[
A\le B\quad\Rightarrow\quad f(A)\le f(B)
\]
for positive definite matrices of arbitrary order.  It then follows that the inequality also holds for positive operators on an arbitrary Hilbert space. An operator monotone function $ f $ is said to be {\it symmetric} if
$f(t)=tf(t^{-1})$ for $ t>0 $ and {\it normalized} if $f(1)=1.$

\begin{Definition}

${\mathcal F}_{op}$ is the class of functions $f: {\mathbb R}_+
\to{\mathbb R}_+$ such that
\begin{enumerate}

\item $f$ is operator monotone increasing,

\item $tf(t^{-1})=f(t)\qquad t>0,$

\item $f(1)=1.$

\end{enumerate}
\end{Definition}

The fundamental result, due to Kubo and Ando, is the following.

\begin{Theorem}
There is a bijection $ f\mapsto m_f $ between ${\mathcal M}_{op}$ and ${\mathcal F}_{op}$ given by
the formula
\[
m_f(A,B)= A^{1/2}f(A^{-1/2} BA^{-1/2})A^{1/2}.
\]
\end{Theorem}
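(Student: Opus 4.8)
The plan is to exhibit two maps and show they are mutually inverse: the map $\Phi\colon f\mapsto m_f$ given by the stated formula, and the map $\Psi\colon m\mapsto f$ given by $f(t)=m(1,t)$. The latter makes sense because $1$ and $t$ commute, so the mean of the scalars $1$ and $t$ is again a scalar; more generally, by the spectral theorem $m(I,X)$ acts on a common eigenbasis of $I$ and $X$ eigenvalue-by-eigenvalue as the numeric mean, so that $m(I,X)=f(X)$ in the sense of functional calculus (this reduction to the commutative Proposition of Section~\ref{commutativemeans} is routine but must be recorded). I would first establish $\Phi\circ\Psi=\mathrm{id}$, i.e. that an abstract operator mean is necessarily of the stated form, and then verify that $\Phi$ genuinely lands in $\mathcal M_{op}$ so that $\Psi\circ\Phi=\mathrm{id}$.

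The elegant heart of the argument is that the transformer inequality, when applied with an \emph{invertible} congruence, collapses into an equality. Given a mean $m$ and positive definite $A,B$, applying the transformer inequality with $C=A^{-1/2}$ to the pair $(A,B)$ gives
\[
A^{-1/2}m(A,B)A^{-1/2}\le m(I,A^{-1/2}BA^{-1/2}),
\]
while applying it with $C=A^{1/2}$ to the pair $(I,A^{-1/2}BA^{-1/2})$ gives, after conjugation, the reverse inequality. Combining the two yields
\[
m(A,B)=A^{1/2}m(I,A^{-1/2}BA^{-1/2})A^{1/2}=A^{1/2}f(A^{-1/2}BA^{-1/2})A^{1/2},
\]
which is exactly $m_f(A,B)$. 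Along the way one checks that $\Psi(m)=f$ lies in $\mathcal F_{op}$: normalization $f(1)=1$ is idempotency of $m$; operator monotonicity of $f$ is immediate from the monotonicity axiom with first arguments fixed at $I$, since $S\le T$ forces $f(S)=m(I,S)\le m(I,T)=f(T)$; and the functional equation $tf(t^{-1})=f(t)$ follows from symmetry together with numeric homogeneity, the latter itself a consequence of the transformer inequality with scalar congruences. This simultaneously proves the representation formula and that $\Psi$ is a well-defined left inverse.

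For the converse I would verify that, for each $f\in\mathcal F_{op}$, the operator $m_f(A,B)=A^{1/2}f(A^{-1/2}BA^{-1/2})A^{1/2}$ satisfies all the axioms of an operator mean. Idempotency $m_f(A,A)=A$ is immediate from $f(1)=1$; monotonicity in the second argument follows from operator monotonicity of $f$ under congruence by $A^{1/2}$; symmetry $m_f(A,B)=m_f(B,A)$ is obtained from $f(t)=tf(t^{-1})$ by the standard identity relating the transpose connection $(A,B)\mapsto m_f(B,A)$ to the function $\tilde f(t)=tf(t^{-1})$, which coincides with $f$ precisely when $f$ is symmetric; and $\Psi\circ\Phi=\mathrm{id}$ holds because $m_f(1,t)=f(t)$ by construction. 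The remaining, genuinely hard ingredients are the transformer inequality for $m_f$ and joint (as opposed to one-sided) monotonicity.

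I expect to dispatch the transformer inequality by invoking the Löwner integral representation of operator monotone functions: a normalized $f\in\mathcal F_{op}$ can be written as
\[
f(t)=\int_{[0,\infty]}\frac{t(1+s)}{t+s}\,d\nu(s)
\]
for a probability measure $\nu$ on the compactified half-line, the endpoints $s=0$ and $s=\infty$ recovering the kernels $1$ and $t$. Each kernel $\tfrac{t(1+s)}{t+s}$ corresponds to a weighted harmonic mean, for which the transformer inequality reduces to the operator convexity of $x\mapsto x^{-1}$ and is elementary; the inequality is then preserved under the positive superposition and monotone limits implicit in the integral, hence holds for $f$. Joint monotonicity and continuity pass through the same representation. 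The main obstacle is therefore not the bijection bookkeeping but this structural input — that every operator monotone function is an average of harmonic-type means, i.e. Löwner's theorem — together with the care needed to extend all the identities above from positive definite operators to arbitrary positive operators by continuity and approximation by $A+\varepsilon I$.
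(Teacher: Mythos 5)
The paper itself offers no proof of this statement: it is quoted as the fundamental theorem of Kubo and Ando, with the argument residing in \cite{KuboAndo79/80}. Measured against that classical proof, your architecture is the right one, and two of its pillars are correctly reproduced: the observation that the transformer inequality collapses to an equality under invertible congruences (your two applications with $C=A^{-1/2}$ and $C=A^{1/2}$ are exactly the standard ones), and the reduction of the converse direction to the integral representation $f(t)=\int t(1+s)/(t+s)\,d\nu(s)$, whose extreme kernels are weighted harmonic means for which the transformer inequality is elementary.

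There is, however, one genuine gap, and it sits at the crux. You assert that $m(I,X)=f(X)$ ``by the spectral theorem'', because $m$ supposedly acts on a common eigenbasis of $I$ and $X$ eigenvalue-by-eigenvalue. The spectral theorem gives nothing of the sort: $m$ is an abstract two-variable operation constrained only by the axioms, and nothing in them says a priori that $m(I,X)$ is even diagonal in an eigenbasis of $X$ (note that every basis is an eigenbasis of $I$, so simultaneous diagonalizability of the pair $(I,X)$ carries no information). The correct argument is: for any unitary $U$ commuting with $X$, the transformer equality gives $Um(I,X)U^*=m(UIU^*,UXU^*)=m(I,X)$, so $m(I,X)$ lies in the bicommutant $\{X\}''$ and hence is \emph{some} function of $X$; to identify that function as $t\mapsto m(1,t)$ one then works with the spectral projections $P_i$ of $X$, using the transformer inequality with $C=P_i$ for one inequality and joint monotonicity (from $P_i\le I$ and $\lambda_i P_i\le X$) for the other, and finally a continuity argument to pass from operators with finite spectrum to general positive $X$. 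This is lemma-level content occupying a substantial part of Kubo and Ando's paper, not routine bookkeeping, and both halves of your proof lean on it: the representation formula $m(A,B)=A^{1/2}f(A^{-1/2}BA^{-1/2})A^{1/2}$ needs it to replace $m(I,\cdot)$ by $f(\cdot)$, and your derivation of operator monotonicity of $f$, namely $f(S)=m(I,S)\le m(I,T)=f(T)$, is not even meaningful until $m(I,S)$ is known to equal $f(S)$ in the functional-calculus sense. With that lemma supplied, the rest of your outline (including the appeal to L\"owner's theorem, a legitimate citation of a classical structural result) goes through.
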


\begin{Remark}
The function in ${\mathcal F}_{op}$ are (operator) concave which makes the operator case quite different from the numerical (commutative) case. For example, there exist convex functions in $ \mathcal F_{num}, $ see  \cite{GH:2017}.
\end{Remark}

If $\rho$ is a density matrix (a quantum state) and $A$ is a self-adjoint matrix (a quantum observable), then the expectation of $ A $ in the state $ \rho $ is defined by setting
\[
 \mean_{\rho}(A)= \tr(\rho A).
\]

\section{The correspondence between  Fisher information and metric adjusted skew information}

We introduce now a technical tool which is useful to establish some fundamental relations between quantum covariance, quantum Fisher information and the metric adjusted skew information.

\begin{Definition}

For $f \in {\mathcal F}_{op}$ we define $f(0)=\lim_{x\to 0} f(x).$ We say that a function $f \in {\mathcal
F}_{op}$ is regular if $f(0) \not= 0, $ and non-regular if $f(0)= 0,$ cf.~\cite{PetzSudar:1996,Hansen:2006b}.

\end{Definition}

\begin{Definition}
A quantum Fisher information is extendable if its radial limit exists and it  is a Riemannian metric on the real projective space generated by the pure states.
\end{Definition}

For the definition of the radial limit see \cite{PetzSudar:1996} where the following fundamental result is proved.

\begin{Theorem}
An operator monotone function $f \in {\mathcal F}_{op}$ is regular, if and only if
$ \langle \cdot, \cdot \rangle_{\rho,f}$ is extendable.
\end{Theorem}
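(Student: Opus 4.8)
The plan is to work with the explicit spectral representation of the quantum Fisher information and to track the behaviour of its matrix coefficients as the state approaches a pure state along a radius. Recall that to $f\in\mathcal F_{op}$ one associates the Morozova--Chentsov function $c_f(x,y)=m_f(x,y)^{-1}$, and that for a density matrix $\rho=\sum_i\lambda_i\proj{i}$ the metric acts on a self-adjoint tangent vector $A$ with matrix entries $A_{ij}=\bra{i}A\ket{j}$ by
\[
\langle A,A\rangle_{\rho,f}=\sum_{i,j}c_f(\lambda_i,\lambda_j)\,|A_{ij}|^2
=\sum_i\frac{1}{\lambda_i}|A_{ii}|^2+2\sum_{i<j}c_f(\lambda_i,\lambda_j)\,|A_{ij}|^2,
\]
where the diagonal coefficients reduce to $c_f(\lambda_i,\lambda_i)=\lambda_i^{-1}$ because $f(1)=1$. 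First I would fix a pure state, say $\proj{1}$, and describe the tangent space to the manifold of pure states at that point: differentiating a curve $\proj{\psi(s)}$ with $\psi(0)=\ket{1}$ shows it consists precisely of the ``coherence'' directions, i.e.\ self-adjoint $A$ whose only nonzero entries are $A_{1j}$ and $A_{j1}$ with $j\ge 2$. These are the directions transverse to the boundary stratum that must receive a finite, nondegenerate length if the metric is to descend to a Riemannian metric on the projective space of pure states; the diagonal directions and the entries $A_{ij}$ with $i,j\ge 2$ leave that manifold and are therefore irrelevant, even though their coefficients diverge.

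The core of the argument is a single limit computation. Approaching $\proj{1}$ radially means letting $\lambda_1\to 1$ while the remaining eigenvalues $\lambda_j\to 0$. Using the symmetry $f(t)=tf(t^{-1})$ one writes the relevant coefficient as
\[
c_f(\lambda_1,\lambda_j)=\frac{1}{\lambda_1\,f(\lambda_j/\lambda_1)},
\]
so that, since $\lambda_j/\lambda_1\to 0$ regardless of the relative rates of vanishing, it converges to $f(0)^{-1}$. Hence along every coherence direction the quadratic form has the finite radial limit
\[
\lim \langle A,A\rangle_{\rho,f}=\frac{2}{f(0)}\sum_{j\ge 2}|A_{1j}|^2,
\]
whenever $f(0)\neq 0$. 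This shows that regularity is sufficient: the radial limit exists, is independent of the particular radius chosen (only the endpoint enters, through the constant $f(0)^{-1}$), and yields a positive-definite form proportional to the Fubini--Study metric on the tangent space of the pure states, i.e.\ an extendable metric.

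For the converse I would argue by contraposition. If $f$ is non-regular, i.e.\ $f(0)=0$, then the same computation gives $c_f(\lambda_1,\lambda_j)\to+\infty$, so the length of any nonzero coherence vector diverges along the radius and no finite Riemannian metric on the pure states can arise as the radial limit; thus extendability forces $f(0)\neq 0$. Combining the two implications gives the stated equivalence. The step I expect to be the main obstacle is the bookkeeping at the boundary: one must verify that the radial limit is genuinely well defined independently of how the vanishing eigenvalues are scaled, that the directions with $A_{11}\neq 0$ or $A_{ij}\neq 0$ for $i,j\ge 2$ do not belong to the tangent space of the pure-state manifold and so need not be controlled, and that the limiting form is a constant multiple of the Fubini--Study metric, hence truly Riemannian rather than a merely finite quadratic form. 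For these technical points I would invoke the detailed analysis of the radial limit in \cite{PetzSudar:1996}.
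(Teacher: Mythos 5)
The paper does not actually prove this theorem: it is quoted as a ``fundamental result'' with the proof deferred entirely to \cite{PetzSudar:1996}, so there is no internal argument to compare yours against. Taken on its own terms, your sketch is a faithful reconstruction of the core Petz--Sud\'ar computation. The spectral formula $\langle A,A\rangle_{\rho,f}=\sum_{i,j}c_f(\lambda_i,\lambda_j)|A_{ij}|^2$ is the right starting point, your identification of the tangent space to the pure-state manifold at $\proj{1}$ with the coherence directions $A_{1j},A_{j1}$ ($j\ge 2$) is correct (the $\proj{1}$-component of $\frac{d}{ds}\proj{\psi(s)}$ vanishes by normalization), and the pivotal limit $c_f(\lambda_1,\lambda_j)=\bigl(\lambda_1 f(\lambda_j/\lambda_1)\bigr)^{-1}\to f(0)^{-1}$, obtained via the symmetry $f(t)=tf(t^{-1})$, is exactly what makes regularity equivalent to a finite, Fubini--Study-proportional radial limit, with the contrapositive divergence argument handling the converse.

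One caveat you should be aware of: your closing move --- invoking \cite{PetzSudar:1996} for the ``bookkeeping at the boundary'' --- is circular if the goal is a self-contained proof, since that reference is precisely where the theorem is established. The genuinely delicate points you flag are real: along a radius $\rho_t=(1-t)\rho_0+t\proj{1}$ the eigenbasis of $\rho_t$ varies with $t$, so the limit must be shown to exist for a fixed tangent vector despite the moving spectral decomposition; and one must check that the limiting form is well defined and smooth over the whole manifold of pure states, not just finite at one point in one chart. Your sketch asserts rather than proves these facts. Since the paper itself offers only the citation, your attempt is strictly more informative than the paper's treatment, but it should be presented as a proof sketch of the Petz--Sud\'ar argument rather than as a complete independent proof.
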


\begin{Remark}
The reader should be aware that there is no negative connotation associated with the qualification ``non-regular".
For example, a very important quantum Fisher information in quantum physics (see \cite{FickSauermann:1990}), namely the Kubo-Mori metric related to the function $f(x)=(x-1)/\log x,$ is non-regular.
\end{Remark}

We introduce the sets of regular and non-regular functions
\[
{\mathcal F}_{op}^{\, r}:=\{f\in {\mathcal F}_{op}\mid f(0) \not= 0 \},  \quad
{\mathcal F}_{op}^{\, n}:=\{f\in {\mathcal F}_{op}\mid f(0) = 0 \}
\]
and notice that trivially ${\mathcal F}_{op}={\mathcal F}_{op}^{\, r}\dot\cup{\mathcal F}_{op}^{\, n}$\,.

\begin{Definition}
We introduce to $f \in {\mathcal F}_{op}^{\, r}$ the transform $ \tilde f $ given by
\[
\tilde{f}(x)=\frac{1}{2}\left[ (x+1)-(x-1)^2 \frac{f(0)}{f(x)}
\right]
\]
for $ x>0. $ We may also write $ {\tilde f}={\mathcal G}(f), $ cf. {\rm \cite{GibiliscoImparatoIsola:2007, GibiliscoHansenIsola:2009}}.
\end{Definition}
The following result is taken from \cite[Theorem 5.1]{GibiliscoHansenIsola:2009}.

\begin{Theorem}\label{correspondence theorem}
The correspondence $ f \to \tilde f $ is a bijection between ${\mathcal F}_{op}^{\, r}$ and ${\mathcal F}_{op}^{\, n}\,. $
\end{Theorem}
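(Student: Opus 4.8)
The plan is to show first that $ {\mathcal G} $ sends $ {\mathcal F}_{op}^{\, r} $ into $ {\mathcal F}_{op}^{\, n} $, and then that it admits an explicit two-sided inverse. The algebraic requirements on $ \tilde f $ are immediate from the functional equation $ f(t)=tf(t^{-1}) $ together with $ f(1)=1 $: putting $ x=1 $ gives $ \tilde f(1)=1 $; letting $ x\to 0 $ and using $ f(0)\neq 0 $ gives $ \tilde f(0)=\tfrac12(1-f(0)/f(0))=0 $, so the image is indeed non-regular; and replacing $ x $ by $ x^{-1} $ together with $ f(x^{-1})=f(x)/x $ yields $ \tilde f(x^{-1})=x^{-1}\tilde f(x) $, i.e. symmetry. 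All the difficulty is therefore concentrated in proving that $ \tilde f $ is operator monotone, which is not visible from the defining formula because of the factor $ (x-1)^2 $ and the reciprocal $ 1/f $.

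For that step I would exploit the fact that, since $ f $ is non-negative and operator monotone, its reciprocal $ 1/f $ is a Stieltjes function; and since $ f $ is \emph{regular} one has $ 1/f(\infty)=0 $, so there is no constant term and
\[
\frac{1}{f(x)}=\int_{[0,\infty)}\frac{d\mu(\lambda)}{x+\lambda},\qquad \mu\ge 0.
\]
The key identity is the Euclidean division $ (x-1)^2=(x+\lambda)(x-2-\lambda)+(1+\lambda)^2 $, which gives
\[
\frac{(x-1)^2}{f(x)}=\int_{[0,\infty)}\Bigl[(x-2-\lambda)+\frac{(1+\lambda)^2}{x+\lambda}\Bigr]\,d\mu(\lambda).
\]
Substituting into $ \tilde f(x)=\tfrac12(x+1)-\tfrac12 f(0)(x-1)^2/f(x) $, the coefficient of the linear term becomes $ \tfrac12-\tfrac12 f(0)\int d\mu $. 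The crucial computation is that $ \int d\mu=\lim_{x\to\infty}x/f(x)=\lim_{x\to\infty}1/f(x^{-1})=1/f(0) $, so this slope vanishes exactly. Hence $ \tilde f $ equals a constant minus a positive superposition of the operator monotone \emph{decreasing} functions $ (x+\lambda)^{-1} $, that is, a constant plus an operator monotone increasing function; it is therefore operator monotone, and positivity of $ \tilde f $ on $ {\mathbb R}_+ $ follows since it is increasing with $ \tilde f(0)=0 $. This establishes $ {\mathcal G}({\mathcal F}_{op}^{\, r})\subseteq{\mathcal F}_{op}^{\, n} $.

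For bijectivity I would invert the defining relation algebraically: solving for $ f $ gives $ f(x)=f(0)(x-1)^2/\bigl[(x+1)-2\tilde f(x)\bigr] $. The normalization constant is recovered from the second order expansion at $ x=1 $, where $ \tilde f(1)=1 $ and (by symmetry) $ \tilde f'(1)=\tfrac12 $ force both numerator and denominator to vanish to second order, yielding $ f(0)=-\tilde f''(1) $. This produces a candidate inverse $ \Psi(g)(x)=-g''(1)(x-1)^2/\bigl[(x+1)-2g(x)\bigr] $ on $ {\mathcal F}_{op}^{\, n} $; the constant $ -g''(1) $ is strictly positive because the only affine element of $ {\mathcal F}_{op} $ is the arithmetic function (which is regular), so every $ g\in{\mathcal F}_{op}^{\, n} $ is non-affine, and the integral representation of a non-affine operator monotone function forces $ g''<0 $ throughout $ {\mathbb R}_+ $. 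Running the same Stieltjes computation on the reciprocal $ 1/\Psi(g) $ via the canonical representation of $ g $ shows $ \Psi(g)\in{\mathcal F}_{op}^{\, r} $, and the round-trip identities $ {\mathcal G}\circ\Psi=\Psi\circ{\mathcal G}=\mathrm{id} $ then follow by direct substitution. The main obstacle is the operator monotonicity of the transformed functions in both directions: everything hinges on the cancellation of the affine slope, which is precisely what the regularity hypothesis $ f(0)\neq 0 $ (through $ \int d\mu=1/f(0) $) is there to guarantee.
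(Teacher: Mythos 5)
The paper itself offers no proof of this theorem: it is imported verbatim from \cite{GibiliscoHansenIsola:2009} (Theorem 5.1), so your argument has to stand on its own. Its architecture (Stieltjes representation of $1/f$, cancellation of the linear slope via $\int d\mu=1/f(0)$, explicit inverse with normalization $f(0)=-\tilde f''(1)$) is sound and can be completed, but as written it has two genuine gaps. The first is in the forward direction: splitting your identity into
\[
\int(x-2-\lambda)\,d\mu(\lambda)\quad\text{and}\quad\int\frac{(1+\lambda)^2}{x+\lambda}\,d\mu(\lambda)
\]
requires $\int\lambda\,d\mu(\lambda)<\infty$. But a monotone-convergence computation using the symmetry of $f$ gives $\int\lambda\,d\mu(\lambda)=f'(0^+)/f(0)^2$, which is infinite whenever $f'(0^+)=\infty$ --- and this happens precisely for the Wigner--Yanase function $f_{WY}(x)=(1+\sqrt x)^2/4$, the central example of the paper. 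For such $f$ your ``constant minus a positive superposition of $(x+\lambda)^{-1}$'' is an $\infty-\infty$ expression. The repair is to spend the identity $\int d\mu=1/f(0)$ \emph{before} splitting: write $\frac{x+1}{2}=\frac{f(0)}{2}\int(x+1)\,d\mu(\lambda)$ and combine everything under one integral,
\[
\tilde f(x)=\frac{f(0)}{2}\int\Bigl[(x+1)-\frac{(x-1)^2}{x+\lambda}\Bigr]d\mu(\lambda)
=\frac{f(0)}{2}\int\frac{(\lambda+3)x+(\lambda-1)}{x+\lambda}\,d\mu(\lambda),
\]
which is legitimate because both combined terms are separately finite; each integrand is a M\"obius function with determinant $(\lambda+3)\lambda-(\lambda-1)=(1+\lambda)^2>0$ and pole at $-\lambda\le 0$, hence operator monotone, and it is bounded in $\lambda$ for fixed $x$ while $\mu$ is finite, so $\tilde f$ is operator monotone.

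The second gap is more serious: in the backward direction the entire difficulty of surjectivity is the operator monotonicity of your candidate inverse $\Psi(g)$, and you dispose of it with the assertion that ``the same Stieltjes computation'' applied to $1/\Psi(g)$ works. It is not the same computation. In the forward direction you started from a Stieltjes representation of $1/f$; here you must show that $\bigl((x+1)-2g(x)\bigr)/(x-1)^2$ is a Stieltjes function starting from a representation of the operator monotone function $g$ itself, and the double zero of the numerator at the \emph{interior} point $x=1$ has to be produced by the constraints $g(1)=1$ and $g'(1)=\frac12$ (the latter coming from symmetry) before any positivity can be extracted. Concretely: since $g(0^+)=0$ one may write $g(x)=bx+\int_{(0,\infty)}\frac{x}{x+\lambda}\,d\sigma(\lambda)$; then $g(1)=1$ yields $(x+1)-2g(x)=b(1-x)+\int\frac{(x-1)(x-\lambda)}{(1+\lambda)(x+\lambda)}\,d\sigma(\lambda)$, and inserting $g'(1)=\frac12$, i.e. $1-2b=2\int\frac{\lambda\,d\sigma(\lambda)}{(1+\lambda)^2}$, gives
\[
\frac{(x+1)-2g(x)}{(x-1)^2}=2\int_{(0,\infty)}\frac{\lambda\,d\sigma(\lambda)}{(1+\lambda)^2(x+\lambda)},
\]
a genuine Stieltjes function; its reciprocal, hence $\Psi(g)$, is then operator monotone, and your round-trip identities and the sign argument for $-g''(1)$ do go through. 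So the statement you needed is true, but in your write-up the crucial step is asserted rather than proved, and the proof it requires is materially different from the one you gave for the forward direction.
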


In \textsc{Table 2} we have some examples (where $0<\beta<1$).
\begin{table}[ht]
\caption{}\label{eqtable}
\renewcommand\arraystretch{2.5}
\renewcommand{\tabcolsep}{8pt}
\noindent\begin{align*}
\begin{array}{|c|c|}
\hline
 f & \tilde f\\
\hline 
 \displaystyle\frac{1+x}{2} &  \displaystyle\frac{2x}{x+1} \\
 \hline
  \displaystyle \frac{(\sqrt{x}+1)^2}{4}& \sqrt{x} \\
\hline  
 \displaystyle \beta (1-\beta) \frac{(x-1)^2}{(x^{\beta}-1) (x^{1-\beta}-1)} &  \displaystyle \frac{x^{\beta}+x^{1-\beta}}{2} \\
\hline
\end{array}
\end{align*}
\end{table}


\section{Quantum f-Covariance}

The notion of quantum $f$-covariance has been introduced by Petz, see \cite{Petz:2003,GibiliscoHiaiPetz:2009}.
Any Kubo-Ando function $ m_f(x,y)=yf(y^{-1}x) $ for $ x,y>0 $ has a continuous extension to $[0,+\infty) \times [0,+\infty)$ given by
\[
m_f(0,y)=f(0) y, \quad m_f(x,0)=f(0) x, \quad m_f(0,0)=0, \quad x,y>0.
\]
The operator $ m_f(L_{\rho},R_{\rho}) $ is well-defined by the spectral theorem
for any state, see \cite[Proposition 11.1 page 11]{GibiliscoImparatoIsola:2007}.
To self-adjoint $ A $ we set $ A_0=A-(\tr\rho A) I, $ where $ I $ is the identity operator. Note that
\[
\tr \rho A_0=\tr\rho A-(\tr\rho A)\tr \rho=0,
\]
if $ \rho $ is a state.

\begin{Definition} Given a state $ \rho, $ a function $ f\in \mathcal F_{op}  $ and self-adjoint $ A,B $ we define the quantum $ f $-covariance by setting
\[
\Cov_{\rho}^f(A,B) = \tr B_0\, m_f(L_{\rho}, R_{\rho}) A_0
\]
and the corresponding quantum $ f $-variance by $
\Var_{\rho}^f(A)=\Cov_{\rho}^f(A,A). 
$
\end{Definition}

The $ f $-variance is a positive semi-definite sesquilinear form and
\begin{equation}\label {increasingVar}
f \leq g \quad \Rightarrow \quad \Var_{\rho}^f(A) \leq \Var_{\rho}^g(A).
\end{equation}
Note that for the standard covariance we have $\Cov_{\rho}(A,B)=\Cov_{\rho}^{SLD}(A,B),$ where the SLD or Bures-Uhlmann metric is the one associated with the function $(1+x)/2$.

\begin{Proposition} If $\rho$ is a pure state then
$
\Var_{\rho}^f(A)= 2\,m_f(1,0)  \cdot \Var_{\rho}(A),
$ cf. \cite{TothPetz:2013}.
\end{Proposition}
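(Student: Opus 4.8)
The plan is to diagonalize the superoperator $m_f(L_\rho,R_\rho)$ in the eigenbasis of $\rho$ and then read off which contributions survive on a pure state. Since $L_\rho$ and $R_\rho$ (left and right multiplication by $\rho$) commute as operators on matrices, they admit a joint spectral decomposition. Writing $\rho=\sum_i p_i\,\proj{i}$ with $\{\ket{i}\}$ an orthonormal eigenbasis, the rank-one transition operators $\ket{i}\bra{j}$ are joint eigenvectors, with $L_\rho\ket{i}\bra{j}=p_i\,\ket{i}\bra{j}$ and $R_\rho\ket{i}\bra{j}=p_j\,\ket{i}\bra{j}$. Hence $m_f(L_\rho,R_\rho)$ acts diagonally, $m_f(L_\rho,R_\rho)\ket{i}\bra{j}=m_f(p_i,p_j)\,\ket{i}\bra{j}$, where $m_f$ is read off its continuous extension to the boundary recorded at the start of this section. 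Expanding $A_0=\sum_{i,j}(A_0)_{ij}\ket{i}\bra{j}$ and using $\tr\bigl(\ket{k}\bra{l}\,\ket{i}\bra{j}\bigr)=\delta_{kj}\delta_{li}$ together with the Hermiticity $(A_0)_{ji}=\overline{(A_0)_{ij}}$, I would obtain the closed form
\[
\Var_{\rho}^f(A)=\sum_{i,j}m_f(p_i,p_j)\,|(A_0)_{ij}|^2.
\]

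First I would specialize this to a pure state $\rho=\proj{0}$, so that $p_0=1$ and $p_i=0$ for $i\ge 1$. Three simplifications then occur. The diagonal term $i=j=0$ drops out because $(A_0)_{00}=\bra{0}A\ket{0}-\tr(\rho A)=0$, the two coinciding since $\tr(\rho A)=\bra{0}A\ket{0}$; the block $i,j\ge 1$ vanishes because $m_f(0,0)=0$; and the surviving cross terms all carry the common boundary value $m_f(1,0)=m_f(0,1)=f(0)$. Using Hermiticity once more to identify $\sum_{j\ge 1}|(A_0)_{0j}|^2=\sum_{i\ge 1}|(A_0)_{i0}|^2$, the two cross blocks coincide and
\[
\Var_{\rho}^f(A)=2\,m_f(1,0)\sum_{j\ge 1}|(A_0)_{0j}|^2.
\]

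Finally I would evaluate the ordinary variance on the same state. Directly, $\Var_{\rho}(A)=\tr(\rho A_0^2)=\bra{0}A_0^2\ket{0}=\sum_{j}|(A_0)_{j0}|^2=\sum_{j\ge 1}|(A_0)_{0j}|^2$, the last step again using $(A_0)_{00}=0$; equivalently one may apply the displayed closed form to $f_{SLD}(x)=(1+x)/2$, for which $m_f(1,0)=1/2$ reproduces the same sum. Substituting gives $\Var_{\rho}^f(A)=2\,m_f(1,0)\,\Var_{\rho}(A)$, as claimed.

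The only genuinely delicate point is the passage to the boundary of the mean: the eigenvalue $0$ of a pure state forces us to evaluate $m_f$ at the arguments $(1,0)$ and $(0,0)$, so the argument relies essentially on the continuous extension $m_f(x,0)=m_f(0,x)=f(0)x$ and $m_f(0,0)=0$, and on the fact that $m_f(L_\rho,R_\rho)$ stays well-defined by the spectral theorem even for a rank-deficient state. Everything else is routine bookkeeping with the matrix entries of $A_0$.
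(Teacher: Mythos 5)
Your proof is correct, and it is worth noting that the paper itself contains no proof of this Proposition: the statement is simply accompanied by the citation ``cf.~\cite{TothPetz:2013}'', so your derivation supplies an argument the paper omits. The route you take is the natural one and, in substance, the standard one: since $L_\rho$ and $R_\rho$ commute, the matrix units $\ket{i}\bra{j}$ built from an eigenbasis of $\rho$ are joint eigenvectors, giving the closed form $\Var_\rho^f(A)=\sum_{i,j} m_f(p_i,p_j)\,|(A_0)_{ij}|^2$, which you then specialize to the spectrum $(1,0,\dots,0)$. Your bookkeeping checks out: $(A_0)_{00}=0$ kills the diagonal term, $m_f(0,0)=0$ kills the block with $i,j\ge 1$, the two cross blocks each carry the factor $m_f(1,0)=m_f(0,1)=f(0)$ and coincide by Hermiticity, and $\Var_\rho(A)=\tr\rho A_0^2=\sum_{j\ge 1}|(A_0)_{0j}|^2$ matches the paper's identification of the standard variance with the SLD ($f(x)=(1+x)/2$) variance. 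You also correctly flag, and correctly resolve, the only delicate point: for a rank-deficient $\rho$ the superoperator $m_f(L_\rho,R_\rho)$ must be understood through the continuous boundary extension $m_f(x,0)=m_f(0,x)=f(0)x$, $m_f(0,0)=0$, which is exactly the convention the paper records at the start of the section with reference to \cite{GibiliscoImparatoIsola:2007}. As a bonus, your computation makes the ensuing Corollary transparent, since $2\,m_f(1,0)=2f(0)$ vanishes precisely for non-regular $f$.
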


\begin{Corollary}\label{Variancepurestate}

If $\rho$ is a pure state and $f$ is non-regular,  then
$
\Var_{\rho}^f(A)= 0.
$

\begin{proof}
If $f$ is non regular $m_f(1,0)=0$
\end{proof}

\end{Corollary}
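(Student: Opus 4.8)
The plan is to reduce the statement directly to the immediately preceding Proposition, which expresses the $f$-variance on a pure state as a scalar multiple of the ordinary variance. For a pure state $\rho$ we have $\Var_{\rho}^f(A)= 2\,m_f(1,0)\cdot\Var_{\rho}(A)$, so the whole claim hinges on evaluating the numerical prefactor $m_f(1,0)$ and showing that it vanishes precisely when $f$ is non-regular. No new machinery is required beyond the Proposition and the boundary values of the Kubo--Ando mean.

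To evaluate $m_f(1,0)$, I would invoke the continuous extension of $m_f$ to the boundary of the positive quadrant, recalled just before the definition of the $f$-covariance: for $x>0$ one has $m_f(x,0)=f(0)\,x$. Setting $x=1$ gives $m_f(1,0)=f(0)$. By the definition of non-regularity, $f$ non-regular means exactly $f(0)=0$, and here $f(0)=\lim_{x\to 0}f(x)$ is the same limit that enters the regular/non-regular dichotomy, so no reconciliation of conventions is needed. Hence $m_f(1,0)=0$.

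Substituting back into the Proposition yields $\Var_{\rho}^f(A)=2\cdot 0\cdot\Var_{\rho}(A)=0$, which is the assertion. Thus the argument is essentially a one-line chain: pure-state reduction via the Proposition, boundary evaluation $m_f(1,0)=f(0)$, and the defining property $f(0)=0$ of non-regular functions.

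There is no genuine obstacle here, since the nontrivial analytic content is carried entirely by the Proposition (whose computation is attributed to T\'oth--Petz) together with the explicit boundary value of the operator mean. The only point deserving a word of care is the very step that the prefactor in the Proposition is exactly $m_f(1,0)$ and that this equals $f(0)$; one could alternatively unpack the definition $\Var_{\rho}^f(A)=\tr A_0\,m_f(L_{\rho},R_{\rho})A_0$ on a rank-one $\rho$, where the joint spectrum of $(L_\rho,R_\rho)$ is supported on $(1,1),(1,0),(0,1),(0,0)$, observe that the $(1,1)$-component of $A_0$ vanishes while $m_f$ equals $f(0)$ on the $(1,0)$ and $(0,1)$ components and $0$ on $(0,0)$, thereby recovering the prefactor $f(0)$ independently. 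This direct route merely reproves the Proposition, so invoking it is the cleaner path.
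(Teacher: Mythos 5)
Your proposal is correct and follows exactly the paper's route: the paper's one-line proof ("if $f$ is non-regular then $m_f(1,0)=0$") implicitly relies on the same chain you spell out, namely the preceding Proposition $\Var_{\rho}^f(A)=2\,m_f(1,0)\cdot\Var_{\rho}(A)$ together with the boundary extension $m_f(x,0)=f(0)\,x$ and the definition $f(0)=0$ of non-regularity. Your write-up simply makes explicit the steps the paper leaves tacit.
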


\section{Quantum Fisher Information}
The theory of quantum Fisher information is due to Petz and we recall here the basic results.
If ${\mathcal N}$ is a differentiable manifold we denote
by $T_{\rho} \mathcal N$ the tangent space to $\mathcal N$ at the point
$\rho \in {\mathcal N}$.  Recall that there exists a natural
identification
 of $T_{\rho}{\mathcal D}^1_n$ with the space of self-adjoint traceless
 matrices; namely, for any $\rho \in {\mathcal D}^1_n $
$$
T_{\rho}{\mathcal D}^1_n =\{A \in M_n\mid A=A^* \, , \, \hbox{Tr}\, A=0 \}.
$$
A stochastic map is a completely positive and trace preserving operator $T:
M_n \to M_m$. A {\sl monotone metric} is a family of Riemannian metrics $g=\{g^n\}$
 on $\{{\mathcal D}^1_n\}$, $n \in \mathbb{N}$, such that
 $$
 g^m_{T(\rho)}(TX,TX) \leq g^n_{\rho}(X,X)
 $$
 holds for every stochastic map $T:M_n \to M_m$, every faithful state $\rho \in
 {\mathcal D}^1_n,$ and every $X \in T_\rho {\mathcal D}^1_n$.
Usually monotone metrics are normalized in such a way that
$[A,\rho]=0$ implies $g_{\rho} (A,A)={\rm Tr}({\rho}^{-1}A^2)$.
A monotone metric is also called (an example of) {\sl quantum Fisher information} (QFI). This notation
is inspired by Chentsov's uniqueness theorem for commutative monotone metrics \cite{Chentsov:1982}.

Define $L_{\rho}(A)= \rho A$ and $R_{\rho}(A)= A\rho$, and observe
 that $ L_\rho $ and $ R_\rho $ are commuting positive superoperators on $M_n.$ For any $f\in {\mathcal F}_{op}$ one may also define the positive (non-linear) superoperator
$m_f(L_{\rho},R_{\rho})$.
The fundamental theorem of monotone metrics may be stated in the following way:

\begin{Theorem} (See \cite{Petz:1996}).
    There exists a bijective correspondence between monotone metrics (quantum Fisher information(s))
    on ${\mathcal D}^1_n$ and functions $f\in {\mathcal F}_{op}$.  The correspondence is given by
    the formula
    $$
   \langle A,B \rangle_{\rho,f}={\rm Tr}(A\cdot
    m_f(L_{\rho},R_{\rho})^{-1}(B))
    $$
    for positive matrices $ A $ and $ B. $
\end{Theorem}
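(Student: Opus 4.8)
The plan is to establish the correspondence in its two directions: first the \emph{sufficiency}, that every $f\in\mathcal F_{op}$ yields through the displayed formula a monotone metric, and then the \emph{necessity}, that every monotone metric arises in this way from a unique such $f$.

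For sufficiency I would first record that $L_\rho$ and $R_\rho$ are commuting positive superoperators on $M_n$, so that diagonalizing $\rho=\sum_i\lambda_i P_i$ represents $m_f(L_\rho,R_\rho)$ as multiplication by $m_f(\lambda_i,\lambda_j)$ on the matrix unit $P_i(\cdot)P_j$. For a faithful state all $m_f(\lambda_i,\lambda_j)>0$, so $m_f(L_\rho,R_\rho)$ is positive and invertible and the formula defines a positive-definite sesquilinear form; the symmetry $tf(t^{-1})=f(t)$ gives $m_f(x,y)=m_f(y,x)$ hence Hermiticity, while $m_f(x,x)=x$ together with $f(1)=1$ reproduces the normalization $g_\rho(A,A)=\tr(\rho^{-1}A^2)$ on the commutant $[A,\rho]=0$. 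The substantive point is monotonicity. Using the Hilbert--Schmidt adjoint $T^*$ (a completely positive, unital map) of the stochastic map $T$, the inequality $g^m_{T\rho}(TX,TX)\le g^n_\rho(X,X)$ is, by trace duality, equivalent to the superoperator inequality $T^*\,m_f(L_{T\rho},R_{T\rho})^{-1}\,T\le m_f(L_\rho,R_\rho)^{-1}$.

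I would obtain this by first proving the companion inequality $T\,m_f(L_\rho,R_\rho)\,T^*\le m_f(L_{T\rho},R_{T\rho})$ and then inverting it: setting $S=m_f(L_{T\rho},R_{T\rho})^{-1/2}\,T\,m_f(L_\rho,R_\rho)^{1/2}$, the companion inequality reads $SS^*\le I$, whence $S^*S\le I$, which is exactly the desired form. The companion inequality is where the operator structure enters: by the transformer inequality one has $T\,m_f(L_\rho,R_\rho)\,T^*\le m_f(TL_\rho T^*,\,TR_\rho T^*)$, and a Kadison--Schwarz estimate for the unital completely positive map $T^*$ controls $TL_\rho T^*\le L_{T\rho}$ and $TR_\rho T^*\le R_{T\rho}$, after which monotonicity of $m_f$ in each variable closes the argument. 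To keep the positive-map estimates elementary I would instead reduce a general $T$, via Stinespring, to a composition of an isometric embedding, a unitary conjugation and a partial trace; the first two are handled by covariance and the whole difficulty concentrates in the partial-trace (conditional-expectation) step, which is precisely the operator-concavity content of $f\in\mathcal F_{op}$. This is the main obstacle.

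For necessity, given an abstract monotone metric $g$ I would use invariance under unitary conjugation to show that in the eigenbasis of $\rho$ the form is carried by a symmetric kernel, $g_\rho(A,A)=\sum_{ij}c(\lambda_i,\lambda_j)|A_{ij}|^2$; scaling $\rho$ forces the homogeneity $c(tx,ty)=t^{-1}c(x,y)$ and the normalization forces $c(x,x)=x^{-1}$. Writing $m(x,y):=c(x,y)^{-1}$, the theorem amounts to the statement that $g$ is monotone under \emph{all} stochastic maps if and only if $m$ obeys the transformer inequality, i.e.\ is a Kubo--Ando mean; the Kubo--Ando theorem of Section~\ref{KuboAndomeans} then produces a unique $f\in\mathcal F_{op}$ with $m=m_f$. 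The delicate half here is that monotonicity under commutative channels alone would only yield numerical monotonicity of $m$; to force genuine operator monotonicity I would test $g$ against quantum channels restricted to $2\times2$ blocks (partial traces from a doubled system together with pinchings), converting the metric inequality into the defining matrix-monotonicity inequalities for $f$ and then invoking L\"owner's theorem. Establishing that these block tests suffice to pin down operator monotonicity is the crux of the converse.
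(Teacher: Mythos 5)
First, a point of reference: the paper itself contains \emph{no} proof of this theorem --- it is quoted as a known result with a citation to Petz (1996) --- so your attempt can only be compared with the standard argument, not with anything internal to the paper. Against that standard, your \emph{sufficiency} half is correct and is essentially Petz's own route: trace duality converts monotonicity into the superoperator inequality $T^*\,m_f(L_{T\rho},R_{T\rho})^{-1}\,T\le m_f(L_\rho,R_\rho)^{-1}$; your inversion trick with $S=m_f(L_{T\rho},R_{T\rho})^{-1/2}\,T\,m_f(L_\rho,R_\rho)^{1/2}$ and $SS^*\le I\Leftrightarrow S^*S\le I$ is the right lemma; the companion inequality follows from the transformer inequality (valid for arbitrary linear maps between Hilbert spaces, via the parallel-sum representation of Kubo--Ando means) plus the Schwarz inequality $T^*(X)T^*(X)^*\le T^*(XX^*)$ for the unital completely positive adjoint, which is precisely $TL_\rho T^*\le L_{T\rho}$ and $TR_\rho T^*\le R_{T\rho}$. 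Two remarks: you need $T\rho$ faithful for the inverses to exist, and your decision to argue ``instead'' via Stinespring is a step backwards --- the conditional-expectation step you then call ``the main obstacle'' requires joint convexity of the metric form, a nontrivial theorem, whereas the Kadison--Schwarz route you had already written down closes completely. Keep the first route and drop the second.

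The genuine gap is in the \emph{necessity} half, and it sits exactly where you admit it does. The reduction to a kernel $g_\rho(A,A)=\sum_{ij}c(\lambda_i,\lambda_j)|A_{ij}|^2$ with homogeneity $c(tx,ty)=t^{-1}c(x,y)$ and normalization $c(x,x)=x^{-1}$ is standard and acceptable in sketch form. But the entire content of Petz's theorem is the step you leave open: that monotonicity under \emph{all} stochastic maps forces $f(t)=1/c(t,1)$ to be \emph{operator} monotone, not merely numerically monotone (which is all that pinchings and classical channels can ever deliver). Your plan --- ``test $g$ against quantum channels restricted to $2\times2$ blocks \dots then invoke L\"owner's theorem'' --- is a declaration of intent, not an argument: you never exhibit the channels, never derive from them the inequalities $A\le B\Rightarrow f(A)\le f(B)$ in every matrix order (or the positivity of the L\"owner matrices, or the Pick property), and you concede that showing such block tests suffice ``is the crux of the converse.'' Since surjectivity of the correspondence $f\mapsto\langle\cdot,\cdot\rangle_{\rho,f}$ \emph{is} this statement, the claimed bijection is not established. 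To close it one must do what the literature actually does: for positive definite $A,B$ embed them as a block state $\rho$ proportional to $\mathrm{diag}(A,B)$, note that on off-diagonal tangent vectors the metric is governed by the two-variable superoperator $c(L_A,R_B)$ (because $L_\rho$ and $R_\rho$ act on the $(1,2)$ block as $L_A$ and $R_B$), and then show that monotonicity under partial traces and block channels yields the transformer inequality for $m=c^{-1}$, at which point the Kubo--Ando theorem of Section~3 produces the unique operator monotone $f$. That derivation is the hard half of the theorem, and it is missing from your proposal.
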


\section{Metric adjusted skew information}

By using the general form of the quantum Fisher information it is possible to greatly generalize the Wigner-Yanase information measure. To $ f \in \mathcal F_{op} $ the so-called Morosova function $ c_f(x,y) $ is defined by setting
\begin{equation}
c_f(x,y)=\frac{1}{yf(xy^{-1})}=m_f(x,y)^{-1}\qquad x,y>0.
\end{equation}
The corresponding monotone symmetric metric $ K_\rho $ is given by
\begin{equation}
K_\rho^f(A,B)=\tr A^* c_f\bigl(L_\rho,R_\rho\bigr) B,
\end{equation}
where $ L_\rho $ and $ R_\rho $ denote left and right multiplication with $ \rho. $  Note that 
$ K^f_{\rho}(A) $ is increasing in $ c_f $ and thus decreasing in $ f. $
If furthermore $ f $ is regular, the notion of metric adjusted skew information \cite[Definition 1.2]{Hansen:2006b} is defined by setting
\begin{equation}
I^f_{\rho}(A) =I^f(\rho,A)=\frac{f(0)}{2} K^f_\rho\bigl(i[\rho,A^*],  i[\rho,A]\bigr),
\end{equation}
where $ \rho>0. $ We use the second notation, $ I^f(\rho,A), $ when the expression of the state takes up too much space. We also tacitly extended the metric adjusted skew information to arbitrary (non-self-adjoint) operators $ A. $
It is convex  \cite[Theorem 3.7]{Hansen:2006b}  in the state variable $ \rho $ and
\begin{equation}
0\le I^f_\rho(A)\le \Var_\rho(A)
\end{equation}
with equality if $ \rho $ is pure  \cite[Theorem 3.8]{Hansen:2006b}, see also the summery with interpretations in \cite[Theorem 1.2]{CaiHansen:2010}.  Furthermore, the notion of unbounded metric adjusted skew information for non-regular functions in $ \mathcal F_{op} $ is introduced in \cite[Theorem 5.1]{CaiHansen:2010}.  For regular $ f\in\mathcal F_{op} $ the metric adjusted skew information may be written as
\[
I_\rho^f(A)=\tr\rho A^2-\tr A\, m_{\tilde f}(L_\rho,R_\rho) A,
\]
se \cite[equation (7)]{AudenaertCaiHansen:2008}. We thus obtain that the metric adjusted skew information is decreasing in the transform $ \tilde f $ for arbitrary self-adjoint $ A, $ that is
\begin{equation} \label{decreasingMASI}
\tilde f \leq \tilde g \quad \Rightarrow \quad I_{\rho}^f (A) \geq I_{\rho}^g(A)\qquad\text{for}\quad f,g\in {\mathcal F}_{op}^{\, r}\,.
\end{equation}
We may also write
\[
\check{f}=\frac{f(0)}{f(t)}\qquad\text{and}\qquad \check{c}(x,y)=y^{-1}\check{f}(xy^{-1})
\]
and obtain
\[
I_\rho^f(A)=\frac{1}{2} \tr i[\rho,A^*] \check{c}\bigl(L_\rho,R_\rho\bigr) i[\rho,A],
\]
cf. \cite[equation (10)]{AudenaertCaiHansen:2008}. It follows that the metric adjusted skew information is increasing in $ \check{f} $ for arbitrary $ A. $
 It may be derived from \cite[Proposition 6.3, page 11]{GibiliscoImparatoIsola:2007}, that the metric adjusted skew information can be expressed as the difference
\[
I^f_{\rho}(A)=\Var_{\rho}(A) - \Var^{\tilde f}_{\rho}(A)
\]
with extension to the sesquilinear form
\[
I^f_{\rho}(A,B)=\Cov_{\rho}(A,B) - \Cov^{\tilde f}_{\rho}(A,B).
\]

\subsection{Information inequalities}

A function $ f\colon \mathbb{R_+}\to \mathbb{R_+} $ is in $ \mathcal F_{op} $ if and only if it allows a representation of the form
\begin{equation}\label{canonical representation in terms of a weigt-function}
f(t)=\frac{1+t}{2}\exp\int_0^1\frac{(\lambda^2-1)(1-t)^2}{(\lambda+t)(1+\lambda t)(1+\lambda)^2}\,h_f(\lambda)\,d\lambda,
\end{equation}
where the weight function $ h_f\colon [0,1]\to[0,1] $ is measurable. The equivalence class containing $ h_f $ is uniquely determined by $ f, $
cf. \cite[Theorem 2.1]{AudenaertCaiHansen:2008}. This representation gives rise to an order relation in $ \mathcal F_{op}. $

\begin{Definition}
Let $ f,g\in \mathcal F_{op}. $ We say that $ f $ is majorized by $ g $ and write $ f\preceq g, $ if the function
\[
\varphi(t)=\frac{t+1}{2}\,\frac{f(t)}{g(t)}\qquad t>0
\]
is in $ \mathcal F_{op}\,. $ 
\end{Definition}

The partial order relation $ \preceq $ is stronger that the usual order relation $ \le, $ and
it renders $ (\mathcal F_{op}\,, \preceq) $ into a lattice with
\begin{equation}
f_\text{min}(t)=\frac{2t}{t+1}\qquad\text{and}\qquad f_\text{max}(t)=\frac{t+1}{2}\qquad
\end{equation}
as respectively minimal element and maximal element.
Furthermore, 
\begin{equation}
 f\preceq g\quad\text{if and only if}\quad h_f\ge h_g\qquad\text{almost everywhere}, 
\end{equation} 
 cf.  \cite[Theorem 2.4]{AudenaertCaiHansen:2008}. The restriction of $ \preceq $ to the regular part of $ \mathcal F_{op} $ induces a partial order relation $ \preceq $ on the set of metric adjusted skew informations.
 
 \begin{Proposition}\label{restrictions of the lattice structure}
The restriction of the order relation $ \preceq $ renders the regular part of $  \mathcal F_{op} $ into a lattice. In addition, if one of two functions $ f, g\in\mathcal F_{op} $ is non-regular, then the minorant $ f\wedge g  $ is also non-regular.
\end{Proposition}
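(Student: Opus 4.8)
The plan is to push the entire question through the canonical representation \eqref{canonical representation in terms of a weigt-function}, under which $ f\mapsto h_f $ is a bijection onto equivalence classes of measurable weights $ h\colon[0,1]\to[0,1] $ and, by the cited equivalence $ f\preceq g\iff h_f\ge h_g $ almost everywhere, the order $ \preceq $ becomes the reverse of the pointwise order on weights. Consequently the lattice operations on $ (\mathcal F_{op},\preceq) $ are realised pointwise: the greatest lower bound and least upper bound satisfy
\[
h_{f\wedge g}=\max(h_f,h_g)\qquad\text{and}\qquad h_{f\vee g}=\min(h_f,h_g)
\]
almost everywhere, both of which are again measurable with values in $ [0,1] $. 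This is exactly the representation of the lattice $ (\mathcal F_{op},\preceq) $ already recorded, and it is the only structural input I would use.

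The decisive step is to translate regularity into an integrability condition. Evaluating \eqref{canonical representation in terms of a weigt-function} in the limit $ t\downarrow0 $ gives
\[
f(0)=\frac12\exp\Bigl(-\int_0^1 w(\lambda)\,h_f(\lambda)\,d\lambda\Bigr),\qquad w(\lambda)=\frac{1-\lambda}{\lambda(1+\lambda)}\ge0 .
\]
To justify this I would apply monotone convergence: after changing sign, the integrand equals $ \dfrac{(1-\lambda^2)(1-t)^2}{(\lambda+t)(1+\lambda t)(1+\lambda)^2}\,h_f(\lambda)\ge0 $, and each of the factors $ (1-t)^2 $, $ (\lambda+t)^{-1} $, $ (1+\lambda t)^{-1} $ increases as $ t\downarrow0 $, so the integrand increases monotonically to $ w(\lambda)h_f(\lambda) $. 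Hence the exponent converges to $ -\int_0^1 w\,h_f $ whether this integral is finite or $ +\infty $. Since $ w\ge0 $ with its only singularity the simple pole at $ \lambda=0 $, we conclude that $ f $ is regular exactly when $ \int_0^1 w\,h_f\,d\lambda<\infty $ and non-regular exactly when this integral diverges.

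With this dictionary both assertions are immediate. For the lattice claim on $ \mathcal F_{op}^{\,r} $, take $ f,g $ regular. For the join, $ h_{f\vee g}=\min(h_f,h_g)\le h_f $, so $ \int_0^1 w\,h_{f\vee g}\le\int_0^1 w\,h_f<\infty $ and $ f\vee g $ is regular; for the meet, $ h_{f\wedge g}=\max(h_f,h_g)\le h_f+h_g $, so $ \int_0^1 w\,h_{f\wedge g}\le\int_0^1 w\,h_f+\int_0^1 w\,h_g<\infty $ and $ f\wedge g $ is regular. Thus $ \mathcal F_{op}^{\,r} $ is closed under both operations and is therefore a sublattice of $ \mathcal F_{op} $, hence a lattice. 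For the second statement, if $ f $ is non-regular then $ \int_0^1 w\,h_f=\infty $, and since $ h_{f\wedge g}=\max(h_f,h_g)\ge h_f $ we get $ \int_0^1 w\,h_{f\wedge g}\ge\int_0^1 w\,h_f=\infty $, so $ f\wedge g $ is non-regular for every $ g $.

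The only genuinely non-formal point, and the one I expect to require the most care, is the evaluation of $ f(0) $ from the representation: setting up the monotonicity of the integrand in $ t $ so that monotone convergence applies, and isolating the weight $ w $ together with the observation that its sole divergence sits at $ \lambda=0 $. Once $ f(0) $ is expressed through $ \int_0^1 w\,h_f $, the remainder rests entirely on the elementary bounds $ \max(h_f,h_g)\le h_f+h_g $ and $ \min(h_f,h_g)\le h_f $, which respectively transfer finiteness (for the meet and join of regular functions) and preserve divergence (for the meet with a non-regular function).
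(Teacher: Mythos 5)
Your proof is correct and takes essentially the same route as the paper: regularity is converted into an integrability condition on the weight function in the canonical representation, and the conclusion follows from $h_{f\wedge g}=\max\{h_f,h_g\}\le h_f+h_g$, $h_{f\vee g}=\min\{h_f,h_g\}\le h_f$, and $h_f\le h_{f\wedge g}$. You in fact do slightly more than the paper, which asserts its criterion $\int_0^1 \lambda^{-1}h_f(\lambda)\,d\lambda<\infty$ without justification, whereas you derive the limit $f(0)$ by monotone convergence; since your weight $w(\lambda)=(1-\lambda)/\bigl(\lambda(1+\lambda)\bigr)$ has its only singularity at $\lambda=0$, where it behaves like $1/\lambda$, the two integrability conditions coincide.
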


\begin{proof}
Take $ f\in \mathcal F_{op} $ with representative function $ h_f $ as given in (\ref{canonical representation in terms of a weigt-function}). Then it follows that $ f $  is regular if and only if the integral
\begin{equation}\label{integration condition}
\int_0^1 \frac{h_f(\lambda)}{\lambda}\, \fd{}\lambda<\infty.
\end{equation}
Take now regular functions $ f,g\in\mathcal F_{op\,.} $ We know that $ \bigl(\mathcal F_{op}\,,\preceq\bigr) $ is a lattice \cite[bottom of page 141]{AudenaertCaiHansen:2008}, and that the
representative function in (\ref{canonical representation in terms of a weigt-function}) for the minorant $ f\wedge g $ is given by 
\[
h_{f\wedge g}=\max\{h_f,h_g\} \le h_f+h_g
\]
showing that also $ h_{f\wedge g} $ satisfies the integrability condition (\ref{integration condition}) implying that $ f\wedge g $ is regular. Since
\[
h_{f\vee g}=\min\{h_f,h_g\} \le h_f
\]
it also follows that the majorant is regular.

We now take functions $ f,g\in \mathcal F_{op} $  with representative functions $ h_f $ and $ h_g $ and assume that $ f $ is non-regular. Since 
\[
h_{f\wedge g}=\max\{h_f,h_g\}\qquad\text{and thus}\qquad h_f\le h_{f\wedge g}
\]
we obtain that also the minorant $ f\wedge g $ is non-regular.
\end{proof}

\subsection{The Wigner-Yanase-Dyson skew informations}

The Wigner-Yanase-Dyson skew information (with parameter $ p) $ is defined by setting
\[
I_p(\rho,A)=-\frac{1}{2} \tr [\rho^p,A[[\rho^{1-p},A],\qquad 0<p<1.
\]
It is an example of a metric adjusted skew information and reduces to the Wigner-Yanase skew information for $ p=1/2\,. $ The representing function $ f_p $ of $ I_p(\rho,A) $ is given by
\[
f_p(t)=p(1-p)\cdot\frac{(t-1)^2}{(t^p-1)(t^{1-p}-1)}\qquad 0<p<1,
\]
that is $ I_p(\rho,A) = I^{f_p}_{\rho}(A). $ The weight-functions $ h_p(\lambda) $ in equation (\ref{canonical representation in terms of a weigt-function}) corresponding to the representing functions $ f_p $ are given by
\[
h_p(\lambda)=\frac{1}{\pi}\arctan\frac{(\lambda^p +
\lambda^{1-p})\sin p\pi}{1-\lambda-(\lambda^p - \lambda^{1-p})\cos
p\pi}\qquad 0<\lambda<1.
\]
It is a non-trivial result that the Wigner-Yanase-Dyson skew informations $ I_p(\rho,A) $ are increasing in the parameter $ p $ for $ 0< p\le 1/2 $ and decreasing in $ p $ for $ 1/2\le p <1 $ with respect to the order relation $ \preceq, $ cf. \cite[Theorem 2.8]{AudenaertCaiHansen:2008}. The Wigner-Yanase skew information is thus the maximal element among the Wigner-Yanase-Dyson skew informations with respect to the order relation  $ \preceq. $

\subsection{The monotonous bridge}

The family of metrics with representing functions
\[
f_\alpha(t)=t^\alpha\left(\frac{1+t}{2}\right)^{1-2\alpha}\qquad t>0,
\]
decrease monotonously (with respect to $ \preceq) $ from the largest monotone symmetric metric down to the Bures metric for $ \alpha $ increasing from $ 0 $ to $ 1. $ They correspond the the constant weight functions $ h_\alpha(\lambda)=\alpha $ in  equation (\ref{canonical representation in terms of a weigt-function}).
However, the only regular metric in this bridge is the Bures metric $ (\alpha=1). $ It is however possible to construct a variant bridge by choosing the weight functions
\[
    h_p(\lambda)=\left\{\begin{array}{lrl}
                        0,\quad &\lambda &<1-p\\[1ex]
                        p, &\lambda&\ge 1-p
                        \end{array}\right.\qquad 0\le p\le 1
    \]
    in equation (\ref{canonical representation in terms of a weigt-function})
    instead of the constant weight functions. It is non-trivial that these weight functions provide a monotonously decreasing bridge (with respect to $ \preceq) $ of monotone symmetric metrics between the smallest and the largest (monotone symmetric) metric. The benefit of this variant bridge is that all the constituent metrics are regular except for $ p=1. $

\section{Metric adjusted local quantum uncertainty}

We consider a bipartite system $ \mathcal H=\mathcal H_1\otimes\mathcal H_2 $ of two finite dimensional Hilbert spaces. 

\begin{Definition} Let $ f\in \mathcal F_\text{op} $ be regular and take a vector $ \Lambda\in\mathbf R^d. $ We define the {\em Metric Adjusted Local Quantum Uncertainty} (or $f$-LQU) by setting
\begin{equation}
\mathcal U_1^{\Lambda,f}(\rho)= \inf\{ I^f_\rho(K_1\otimes 1_2)\mid K_1\text{ has spectrum $ \Lambda $}\} ,
\end{equation}\label{definition of f-LQU}
where $ \rho_{12} $ is a bipartite state, and $ K_1 $ is the partial trace of an observable $ K $ on $  \mathcal H. $
\end{Definition}

The minimum in the above definition is thus taken over local observables $ K_1\otimes 1_2\in B(\mathcal H_1\otimes H_2) $ such that $ K_1 $ is unitarily equivalent with the diagonal matrix $ \text{diag}(\Lambda). $

\begin{Remark} The metric adjusted LQU has been studied in the literature for specific choices of $ f. $

\begin{itemize}

\item If $ f(x)=f_{WY}(x)=\Bigl(\frac{1+\sqrt{x}}{2} \Bigr)^2$ then ${\mathcal U}_1^{\Lambda,f} $ coincide with the LQU introduced in  \cite[equation 2]{GTA:2013}.

\item If $ f(x)=f_{SLD}(x)=\frac{1+x}{2} $ then $ \mathcal U_1^{\Lambda,f} $ coincides with the Interferometric Power introduced in \cite{GSGTFSSOA:2014}.

\end{itemize}
\end{Remark}

\begin{Proposition}\label{inequality for LQU}  
For $ f ,g \in \mathcal F_{op}^{\,r} $ with $\tilde g \leq \tilde f $ we have the inequality
$
{\mathcal U}^{\Lambda, f}_1(\rho_{12})
\leq
{\mathcal U}^{\Lambda, g}_1(\rho_{12}). 
$
In particular the LQU is smaller than the IP.
\end{Proposition}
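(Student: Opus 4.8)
The plan is to reduce the statement to the pointwise monotonicity of the metric adjusted skew information in the transform, recorded in~(\ref{decreasingMASI}), and then to pass to the infimum defining the $f$-LQU. First I would fix an arbitrary admissible local observable, namely $A=K_1\otimes 1_2$ where $K_1$ is self-adjoint with spectrum $\Lambda$ (so that $A$ is self-adjoint as well). Both $f,g$ lie in $\mathcal F_{op}^{\,r}$, so the metric adjusted skew informations $I^f_\rho$ and $I^g_\rho$ are defined, and the hypothesis $\tilde g\le\tilde f$ lets me apply~(\ref{decreasingMASI}) with the roles of $f$ and $g$ interchanged. This yields the pointwise bound $I^f_\rho(A)\le I^g_\rho(A)$, valid for every such $A$.

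The second step is to take the infimum over the admissible set $S=\{K_1\otimes 1_2\mid K_1\text{ has spectrum }\Lambda\}$. The crucial observation is that this index set is common to both sides: it depends only on $\Lambda$, not on the choice of $f$ or $g$. Since $I^f_\rho(A)\le I^g_\rho(A)$ holds for every $A\in S$, the inequality is preserved by the infimum: for each $A\in S$ one has $\inf_{B\in S}I^f_\rho(B)\le I^f_\rho(A)\le I^g_\rho(A)$, and taking the infimum over $A$ on the right gives ${\mathcal U}^{\Lambda,f}_1(\rho_{12})\le{\mathcal U}^{\Lambda,g}_1(\rho_{12})$. I would stress that no optimizing observable need exist, as the monotonicity of the infimum requires only the pointwise bound on the shared set $S$.

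For the concluding ``in particular'' claim I would read off the relevant transforms from Table~2. The IP corresponds to $g=f_{SLD}(x)=(1+x)/2$, with transform $\tilde g(x)=2x/(x+1)$, while the LQU corresponds to $f=f_{WY}(x)=(1+\sqrt{x})^2/4$, with transform $\tilde f(x)=\sqrt{x}$; both functions are regular, so the Proposition applies. It then remains to check that $\tilde g\le\tilde f$, i.e.\ $2x/(x+1)\le\sqrt{x}$ for $x>0$. Clearing the positive denominator and dividing by $\sqrt{x}$ reduces this to $2\sqrt{x}\le x+1$, which is exactly the arithmetic--geometric mean inequality (with equality at $x=1$). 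Invoking the Proposition with this $f$ and $g$ then gives ${\mathcal U}^{\Lambda,f_{WY}}_1\le{\mathcal U}^{\Lambda,f_{SLD}}_1$, that is, $\text{LQU}\le\text{IP}$.

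The argument is essentially immediate once the direction of~(\ref{decreasingMASI}) is tracked correctly, so I do not expect a substantial obstacle. The one genuine point of care is orientational bookkeeping: one must match the hypothesis $\tilde g\le\tilde f$ against the contravariant dependence of $I^{(\cdot)}_\rho$ on its transform, and in the explicit example confirm that it is the SLD transform $2x/(x+1)$, rather than $\sqrt{x}$, that is the smaller of the two.
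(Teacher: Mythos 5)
Your proof is correct, and its core is the same as the paper's: the pointwise monotonicity (\ref{decreasingMASI}) of the metric adjusted skew information in the transform, followed by a comparison of infima over the common set of local observables with spectrum $\Lambda$. The differences in execution are nonetheless worth recording, and they favor your version. The paper's proof fixes ``the'' minimizing observable $\tilde K_1$ and displays the chain ${\mathcal U}^{\Lambda,f}_1(\rho_{12})=I^f_{\rho_{12}}(\tilde K_1\otimes 1_2)\ge I^g_{\rho_{12}}(\tilde K_1\otimes 1_2)\ge {\mathcal U}^{\Lambda,g}_1(\rho_{12})$; read against the stated hypothesis $\tilde g\le\tilde f$, the middle inequality there points the wrong way (that hypothesis yields $I^f_\rho\le I^g_\rho$ pointwise), and the displayed chain concludes ${\mathcal U}^{\Lambda,f}_1\ge{\mathcal U}^{\Lambda,g}_1$, the reverse of the claim; the paper's argument becomes correct only after interchanging the roles of $f$ and $g$, equivalently taking $\tilde K_1$ to minimize the larger functional $I^g$. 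Your orientational bookkeeping is consistent with the hypothesis as stated, and your passage to the infimum needs no minimizer at all, whereas the paper's argument presupposes attainment (true here, since the observables with spectrum $\Lambda$ form a compact unitary orbit on which $I^f_\rho$ is continuous, but this is left unstated). Finally, you actually verify the ``in particular'' clause, checking that $2x/(x+1)\le\sqrt{x}$ by the AM--GM inequality and that both representing functions are regular, where the paper simply asserts the conclusion.
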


\begin{proof}
Let $ \tilde K_1 $ be the local observable with spectrum $ \Lambda $ minimizing the metric adjusted skew information. Then
\[
{\mathcal U}^{\Lambda, f}_1(\rho_{12}) =  I^f_{\rho_{12}}\bigl(\tilde K_1\otimes 1_2\bigr)\ge  I^g_{\rho_{12}}\bigl(\tilde K_1\otimes 1_2\bigr)\ge {\mathcal U}^{\Lambda, g}_1(\rho_{12}),
\]
where we used the inequality in  (\ref{decreasingMASI}).
\end{proof}

\begin{Corollary} Let $ g_1 $ and $ g_2 $ be regular functions in $ \mathcal F_{op} $ and set
$
f=\tilde g_1\wedge\tilde g_2
$
with respect to the lattice structure in $ \mathcal F_{op}\,. $ Then there is a regular function $ g $ in 
$ \mathcal F_{op} $ such that $ \tilde g=f=\tilde g_1\wedge\tilde g_2 $ and
\[
\max\bigl\{{\mathcal U}^{\Lambda, g_1}_1(\rho_{12}),\, {\mathcal U}^{\Lambda, g_2}_1(\rho_{12})\bigr\}
\le {\mathcal U}^{\Lambda, g}_1(\rho_{12})
\]
for arbitrary $ \rho_{12}\,. $
\end{Corollary}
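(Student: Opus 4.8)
The plan is to reduce the statement to the monotonicity already recorded in Proposition~\ref{inequality for LQU}, once the existence of the regular function $g$ has been secured. The first task is therefore to show that the meet $f=\tilde g_1\wedge\tilde g_2$, formed in the full lattice $(\mathcal F_{op},\preceq)$, lies in the non-regular class and hence has a preimage under the transform $\mathcal G\colon f\mapsto\tilde f$. Since $g_1,g_2\in\mathcal F_{op}^{\,r}$, the bijective correspondence $f\mapsto\tilde f$ between $\mathcal F_{op}^{\,r}$ and $\mathcal F_{op}^{\,n}$ (Theorem~\ref{correspondence theorem}) places both $\tilde g_1$ and $\tilde g_2$ in $\mathcal F_{op}^{\,n}$. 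By the second assertion of Proposition~\ref{restrictions of the lattice structure}, whenever at least one of two functions is non-regular their minorant is non-regular; applied here this gives $f=\tilde g_1\wedge\tilde g_2\in\mathcal F_{op}^{\,n}$. Invoking the same bijection once more, there is a unique regular $g\in\mathcal F_{op}^{\,r}$ with $\tilde g=f=\tilde g_1\wedge\tilde g_2$, which is exactly the function asserted in the statement.

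Next I would translate the lattice-theoretic meet into pointwise inequalities. By definition of the infimum in $(\mathcal F_{op},\preceq)$ one has $f\preceq\tilde g_1$ and $f\preceq\tilde g_2$. Because $\preceq$ is strictly stronger than the usual pointwise order $\le$ (as noted immediately after the definition of $\preceq$), this yields $\tilde g=f\le\tilde g_1$ and $\tilde g=f\le\tilde g_2$ on $(0,\infty)$. I would then apply Proposition~\ref{inequality for LQU} twice. Reading that proposition with its two functions taken to be $g_1$ and $g$, the hypothesis $\tilde g\le\tilde g_1$ delivers $\mathcal U^{\Lambda,g_1}_1(\rho_{12})\le\mathcal U^{\Lambda,g}_1(\rho_{12})$; reading it instead with $g_2$ and $g$, the hypothesis $\tilde g\le\tilde g_2$ delivers $\mathcal U^{\Lambda,g_2}_1(\rho_{12})\le\mathcal U^{\Lambda,g}_1(\rho_{12})$. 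Taking the maximum of the two left-hand sides produces the claimed inequality, and since neither application constrains the state, it holds for arbitrary $\rho_{12}$.

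The only genuinely delicate point is the first step, namely that the meet must be computed in the full lattice $\mathcal F_{op}$ and that this meet is non-regular. If one were tempted to work inside the regular sublattice the preimage $g$ would simply fail to exist, since $\mathcal G$ maps onto $\mathcal F_{op}^{\,n}$; the content of Proposition~\ref{restrictions of the lattice structure} — that non-regularity is inherited by the minorant — is precisely what guarantees $f\in\mathcal F_{op}^{\,n}$ and hence that $g$ is well defined. Everything after that is a mechanical passage from $\preceq$ to $\le$ together with a double application of the monotonicity of the $f$-LQU in the transform $\tilde f$.
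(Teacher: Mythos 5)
Your proposal is correct and follows essentially the same route as the paper's own proof: non-regularity of $\tilde g_1,\tilde g_2$ via the bijection of Theorem~\ref{correspondence theorem}, non-regularity of the minorant via Proposition~\ref{restrictions of the lattice structure}, existence of the regular preimage $g$, and then Proposition~\ref{inequality for LQU}. The only difference is that you make explicit the step the paper leaves implicit --- passing from $f\preceq\tilde g_i$ to the pointwise inequality $\tilde g\le\tilde g_i$ needed to invoke Proposition~\ref{inequality for LQU} --- which is a useful clarification rather than a change of method.
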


\begin{proof}
The functions $ \tilde g_1 $ and $ \tilde g_2 $ are non-regular by Theorem~\ref{correspondence theorem}. By Proposition \ref{restrictions of the lattice structure} we thus obtain that also the minorant $ f $ is non-regular. Therefore there exists, by the correspondence in 
Theorem~\ref{correspondence theorem}, a (unique) regular function $ g $  in $ \mathcal F_{op} $ such that $ \tilde g=f. $ The assertion then follows by Proposition~\ref{inequality for LQU}.   
\end{proof}

Following \cite{BogaertGirolami:2017} we prove that the metric adjusted LQU is a measure of non-classical correlations, i.e. it meets the criteria which identify discord-like quantifiers, see \cite{rev}.

\begin{Theorem}
If the state $ \rho $ is classical-quantum in the sense of \cite{Piani},
then the metric adjusted LQU vanishes, that is
$ \mathcal U_1^{\Lambda,f}(\rho)=0. $ 
Conversely, if the coordinates of $ \Lambda $ are mutually different (thus rendering the operator $ K_1 $ non-degenerate) and $\mathcal U_1^\Lambda(\rho) =0, $ then $ \rho $ is classical-quantum. 
\end{Theorem}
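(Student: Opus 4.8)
The plan is to reduce both implications to one algebraic fact: for regular $f$ the metric adjusted skew information vanishes exactly on the observables that commute with the state, i.e.
\[
I^f_\rho(A)=0\quad\Longleftrightarrow\quad [\rho,A]=0 .
\]
To prove this lemma I would diagonalize $\rho=\sum_j\lambda_j\proj{j}$ and use the representation $I_\rho^f(A)=\tfrac12\tr i[\rho,A^*]\,\check c(L_\rho,R_\rho)\,i[\rho,A]$ recorded above. Since $\check c(L_\rho,R_\rho)$ multiplies the matrix unit $\kebra{j}{k}$ by the scalar $\check c(\lambda_j,\lambda_k)$ and $i[\rho,A]$ has matrix entries $i(\lambda_j-\lambda_k)A_{jk}$ with $A_{jk}=\bra{j}A\ket{k}$, this gives
\[
I^f_\rho(A)=\tfrac12\sum_{j,k}\check c(\lambda_j,\lambda_k)\,(\lambda_j-\lambda_k)^2\,|A_{jk}|^2 ,
\]
a sum of nonnegative terms. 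Writing $\check c=f(0)\,m_f^{-1}$ and using $f(0)\neq0$ (regularity) together with the boundary value $m_f(x,0)=f(0)x$, the coefficient $\tfrac12\check c(\lambda_j,\lambda_k)(\lambda_j-\lambda_k)^2$ is strictly positive whenever $\lambda_j\neq\lambda_k$ and vanishes when $\lambda_j=\lambda_k$; the only singular entry $(\lambda_j,\lambda_k)=(0,0)$ is harmless because it is multiplied by $(\lambda_j-\lambda_k)^2=0$. Hence $I^f_\rho(A)=0$ forces $A_{jk}=0$ for every pair with $\lambda_j\neq\lambda_k$, which is precisely $[\rho,A]=0$, and the reverse direction is immediate.

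With the lemma in hand, the forward implication is short. If $\rho$ is classical-quantum there is an orthonormal basis $\{\ket{i}\}$ of $\mathcal H_1$ with $\rho=\sum_i\kebra{i}{i}\otimes\sigma_i$ for positive $\sigma_i$ on $\mathcal H_2$. I would then take the admissible observable $K_1=\sum_i\Lambda_i\kebra{i}{i}$, which has spectrum $\Lambda$; a one-line computation gives $[K_1\otimes 1_2,\rho]=0$, so the lemma yields $I^f_\rho(K_1\otimes 1_2)=0$, and since $\mathcal U_1^{\Lambda,f}(\rho)$ is a nonnegative infimum this forces $\mathcal U_1^{\Lambda,f}(\rho)=0$.

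For the converse I would first note that the infimum is attained: the admissible set, the unitary orbit $\{U\,\mathrm{diag}(\Lambda)\,U^*\}$, is the continuous image of the compact unitary group, hence compact, and $K_1\mapsto I^f_\rho(K_1\otimes 1_2)$ is continuous. Thus $\mathcal U_1^{\Lambda,f}(\rho)=0$ produces a minimizer $K_1$ with spectrum $\Lambda$ and $I^f_\rho(K_1\otimes 1_2)=0$, so $[K_1\otimes 1_2,\rho]=0$ by the lemma. Because the coordinates of $\Lambda$ are mutually distinct, $K_1$ is non-degenerate; writing $K_1=\sum_i\Lambda_i\kebra{i}{i}$ with distinct $\Lambda_i$, the eigenprojections $P_i=\kebra{i}{i}\otimes 1_2$ of $K_1\otimes 1_2$ carry distinct eigenvalues, and commutation gives $(\Lambda_i-\Lambda_j)P_i\rho P_j=0$. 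Hence $P_i\rho P_j=0$ for $i\neq j$ and $\rho=\sum_i P_i\rho P_i=\sum_i\kebra{i}{i}\otimes\sigma_i$ with $\sigma_i=(\bra{i}\otimes 1_2)\rho(\ket{i}\otimes 1_2)\ge0$; normalizing $\sigma_i=p_i\rho_{2,i}$ exhibits $\rho$ as classical-quantum.

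I expect the lemma to be the crux, and within it the delicate point is strict positivity of the coefficients when $\rho$ is not faithful, where $m_f(L_\rho,R_\rho)$ fails to be invertible and one must pass to the continuous boundary extension $m_f(x,0)=f(0)x$; regularity ($f(0)\neq0$) is exactly what keeps the relevant coefficient finite and strictly positive there. The attainment of the infimum in the converse is a secondary point, made routine by compactness of the unitary orbit.
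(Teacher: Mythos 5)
Your proof is correct and follows essentially the same route as the paper's: both directions are reduced to the equivalence $I^f_\rho(A)=0\Leftrightarrow[\rho,A]=0$ for regular $f$, and the converse is finished by the same spectral-projection argument ($P_i\rho P_j=0$ for $i\neq j$, hence $\rho=\sum_i P_i\rho P_i$, hence classical-quantum). In fact your write-up is tighter than the paper's in two places. First, the paper invokes the vanishing criterion only for \emph{faithful} states, yet the converse must apply it to a possibly singular $\rho$; your explicit computation $I^f_\rho(A)=\tfrac12\sum_{j,k}\check c(\lambda_j,\lambda_k)(\lambda_j-\lambda_k)^2|A_{jk}|^2$, with the boundary value $m_f(x,0)=f(0)x$ and regularity $f(0)\neq 0$ guaranteeing strictly positive coefficients whenever $\lambda_j\neq\lambda_k$, closes exactly that gap. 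Second, the paper passes from $\mathcal U_1^{\Lambda,f}(\rho)=0$ to the existence of a local observable commuting with $\rho$ without comment, which tacitly assumes the infimum is attained; your compactness-of-the-unitary-orbit argument supplies the missing justification.
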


\begin{proof}
We note that the metric adjusted skew information $ I_\rho^f(A) $ for a faithful state $ \rho $ is vanishing if and only if $ \rho $ and $ A $ commute. If $ \rho $ is classical-quantum, then
\[
P_1(\rho)=\sum_i (P_{1,i}\otimes 1_2)\rho(P_{1,i}\otimes 1_2)=\rho
\]
for some von Neumann measurement $ P $ given by a resolution $ (P_i) $ of the identity $ 1_1 $ in terms of one-dimensional projections. We may choose $ K_1 $ diagonal with respect to this resolution, so $ K_1\otimes 1_2 $ and $ \rho $ commute and thus $ \mathcal U_1^{\Lambda,f}(\rho) =0. $

If on the other hand the $f$-LQU $\mathcal U_1^{\Lambda,f}(\rho)=0, $ then there exist a local observable $ K_1\otimes 1_2 $ such that $ [\rho,K_1\otimes 1_2]=0. $ By the spectral theorem we write
\[
K_1=\sum_i \lambda_i P_{1,i}=\sum_i\lambda_i|i\rangle_1\langle i| 
\]
and since
\[
\rho(K_1\otimes 1_2)=(K_1\otimes 1_2)\rho
\]
we obtain by multiplying with $ P_{1,i}\otimes 1_2 $ from the left and $ P_{1,j}\otimes 1_2 $ from the right the identity
\[
\lambda_j (P_{1,i}\otimes 1_2)\rho(P_{1,j}\otimes 1_2)
=\lambda_i (P_{1,i}\otimes 1_2)\rho(P_{1,j}\otimes 1_2).
\]
If $ K_1 $ is non-degenerate, it thus follows that
\[
(P_{1,i}\otimes 1_2)\rho(P_{1,j}\otimes 1_2)=0\qquad\text{for}\quad i\ne j. 
\]
By summing over all $ j $ different from $ i $ we obtain
\[
(P_{1,i}\otimes 1_2)\rho((1_1-P_{1,i})\otimes 1_2)=0,
\]
thus 
\[
(P_{1,i}\otimes 1_2)\rho=(P_{1,i}\otimes 1_2)\rho(P_{1,i}\otimes 1_2),
\]
so $ P_{1,i}\otimes 1_2 $ and $ \rho $ commute. It follows that
\[
P_1(\rho)=\sum_i (P_{1,i}\otimes 1_2)\rho(P_{1,i}\otimes 1_2)=\rho,
\]
so $ \rho $ is left invariant under the von Neumann measurement $ P $ given by $ (P_i). $ Therefore, $ \rho $ is classical-quantum.
\end{proof}

Recall that Luo and Zhang \cite{LuoZang:2008}
proved that a state $ \rho $ is classical-quantum if and only if there exists a resolution $(P_i)$ of the identity $ 1_1 $ such that 
\[
\rho=\sum_i p_i P_{1,i}\otimes \rho_i,
\]
where each $ \rho_i $ is a state on $ \mathcal H_2 $ and $ p_i\ge 0,  $ and the sum $ \sum_i p_i=1. $
By  \cite[Lemma 3.1]{CaiHansen:2010}  the inequality
\[
I_\rho^f(K_1\otimes 1_2)\ge I_{\rho_1}^f(K_1)
\]
is valid, where $ \rho_1=\tr_2\,\rho_{12}\,. $ Consequently, we obtain that
\begin{equation}
\mathcal U_1^{\Lambda,f}(\rho)\ge  \inf_{K_1} \, I^f_{\rho_1}(K_1)=\inf_{\sigma_1} \, I^f_{\sigma_1}\bigl(K_1\bigr),
\end{equation}
where the minimum is taken over states $ \sigma_1 $ on $ \mathcal H_1 $ unitarily equivalent with $ \rho_1. $

\begin{Theorem}
The metric adjusted LQU is invariant under local unitary transformations.
\end{Theorem}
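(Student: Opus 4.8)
The plan is to establish a stronger statement first---full unitary covariance of the metric adjusted skew information---and then exploit the factorized form of a local unitary $U=U_1\otimes U_2$ to see that the minimization defining the $f$-LQU is carried bijectively onto itself. Concretely, I would first show that for \emph{every} unitary $U$ on $\mathcal H$ and every self-adjoint $A$ one has $I^f_{U\rho U^*}(UAU^*)=I^f_\rho(A)$, and then apply this identity with $U=U_1\otimes U_2$.

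For the covariance step I would start from the representation $I_\rho^f(A)=\tr\rho A^2-\tr A\,m_{\tilde f}(L_\rho,R_\rho)A$ recorded above. The first term is manifestly invariant: by unitarity and cyclicity of the trace, $(UAU^*)^2=UA^2U^*$ and hence $\tr(U\rho U^*)(UAU^*)^2=\tr\rho A^2$. The second term is the crux. Writing $\Phi_U$ for the superoperator $X\mapsto UXU^*$ on $M_n$, a direct computation gives $L_{U\rho U^*}=\Phi_U L_\rho\Phi_U^{-1}$ and $R_{U\rho U^*}=\Phi_U R_\rho\Phi_U^{-1}$. Since $\Phi_U$ is unitary on $M_n$ equipped with the Hilbert--Schmidt inner product and $L_\rho,R_\rho$ commute, the functional calculus in these two commuting variables intertwines with $\Phi_U$, yielding $m_{\tilde f}(L_{U\rho U^*},R_{U\rho U^*})=\Phi_U\,m_{\tilde f}(L_\rho,R_\rho)\,\Phi_U^{-1}$. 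Substituting $A\to UAU^*$ and again using cyclicity collapses the second term to $\tr A\,m_{\tilde f}(L_\rho,R_\rho)A$, establishing the covariance identity.

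The remaining step is to specialize to a local unitary $U=U_1\otimes U_2$. For any $K_1$ with spectrum $\Lambda$ the covariance identity gives $I^f_{U\rho U^*}(K_1\otimes 1_2)=I^f_\rho\bigl(U^*(K_1\otimes 1_2)U\bigr)$, and the tensor structure yields $U^*(K_1\otimes 1_2)U=(U_1^*K_1U_1)\otimes 1_2$, because the second factor absorbs $U_2$ through $U_2^*1_2U_2=1_2$. The map $K_1\mapsto U_1^*K_1U_1$ is a bijection of the set of observables with spectrum $\Lambda$ onto itself, so taking the infimum over this set on both sides leaves it unchanged, i.e.\ $\mathcal U_1^{\Lambda,f}(U\rho U^*)=\mathcal U_1^{\Lambda,f}(\rho)$.

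The main obstacle I anticipate is the functional-calculus intertwining claim $m_{\tilde f}(L_{U\rho U^*},R_{U\rho U^*})=\Phi_U\,m_{\tilde f}(L_\rho,R_\rho)\,\Phi_U^{-1}$: one must argue that conjugation by $\Phi_U$ commutes with the two-variable spectral calculus applied to the commuting pair $(L_\rho,R_\rho)$. This becomes routine once phrased via the spectral decomposition $\rho=\sum_k p_k\proj{k}$, since $L_\rho$ and $R_\rho$ act diagonally on the operator basis $\{\kebra{k}{l}\}$ with eigenvalues $p_k$ and $p_l$, while $\Phi_U$ merely relabels this basis by $U$; the rest is bookkeeping with the trace.
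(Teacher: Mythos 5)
Your proposal is correct and follows essentially the same route as the paper: transfer the local unitary from the state onto the observable via the covariance identity $I^f_{U\rho U^*}(K_1\otimes 1_2)=I^f_\rho\bigl((U_1^*K_1U_1)\otimes 1_2\bigr)$, then note that $K_1\mapsto U_1^*K_1U_1$ is a bijection of the observables with spectrum $\Lambda$, so the infimum is unchanged. The only difference is that the paper simply invokes the unitary covariance of $I^f$ in one line, whereas you prove it explicitly via the intertwining $m_{\tilde f}(L_{U\rho U^*},R_{U\rho U^*})=\Phi_U\,m_{\tilde f}(L_\rho,R_\rho)\,\Phi_U^{-1}$, which is a worthwhile but standard filling-in of detail.
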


\begin{proof}
For the metric adjusted skew information and local unitary transformations we have
\[
\begin{array}{l}
\mathcal U_1^{\Lambda,f}\bigl((U_1  \otimes U_2) \rho_{12}(U_1  \otimes U_2)^\dagger\bigr)\\[1.5ex]
=\min_{K_1}I^f\bigl((U_1  \otimes U_2) \rho_{12}(U_1  \otimes U_2)^\dagger, K_1\otimes 1_2\bigr)\\[1.5ex]
=\min_{K_1}I^f\bigl(\rho_{12},  (U_1  \otimes U_2)^\dagger(K_1 \otimes 1_2)(U_1  \otimes U_2) \bigr)\\[1.5ex]
=\min_{K_1} I^f\bigl(\rho_{12}, (U_1 ^{\dagger}K_1 U_1 \otimes 1_2\bigr)
=\mathcal U^{\Delta,f}_1(\rho_{12}),
\end{array}
\]
where we used the definition in (\ref{definition of f-LQU}).
\end{proof}

\begin{Theorem}
The metric adjusted LQU is contractive under completely positive trace-preserving maps on the non-measured subsystem.
\end{Theorem}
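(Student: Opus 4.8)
The plan is to reduce the claim to two facts already in hand: the unitary covariance $I^f(W\rho W^\dagger,\,WAW^\dagger)=I^f(\rho,A)$ of the metric adjusted skew information (exploited in the proof of the preceding theorem), and the partial-trace monotonicity $I_\rho^f(K_1\otimes 1_2)\ge I_{\rho_1}^f(K_1)$ recorded above from \cite[Lemma 3.1]{CaiHansen:2010}. Let $\Phi$ be the completely positive trace-preserving map acting on the non-measured factor $\mathcal H_2$ and set $\Gamma=\mathrm{id}_1\otimes\Phi$. It suffices to prove, for each fixed observable $K_1$ with spectrum $\Lambda$, the single-observable estimate $I^f_{\Gamma(\rho)}(K_1\otimes 1_2)\le I^f_\rho(K_1\otimes 1_2)$; taking the infimum over all admissible $K_1$ on both sides then gives $\mathcal U_1^{\Lambda,f}(\Gamma(\rho))\le\mathcal U_1^{\Lambda,f}(\rho)$, which is the asserted contractivity.

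To prove the single-observable estimate I would invoke a Stinespring dilation $\Phi(\sigma)=\tr_E[\,U(\sigma\otimes\proj{e})U^\dagger\,]$ with $U$ unitary on $\mathcal H_2\otimes\mathcal H_E$ and $\proj{e}$ a fixed pure ancilla, so that $\Gamma$ factors into three operations on the non-measured side: (a) adjoining the ancilla, $\rho\mapsto\rho\otimes\proj{e}$; (b) conjugation by $V=1_1\otimes U$; and (c) the partial trace over $E$. Along the way the observable travels as $K_1\otimes 1_2\mapsto K_1\otimes 1_2\otimes 1_E$, and because $V$ fixes $1_2\otimes 1_E$ it is left unchanged by step (b). Step (b) is then an equality by the unitary covariance of $I^f$, and step (c) is exactly the inequality $I^f_{\tilde\rho}((K_1\otimes 1_2)\otimes 1_E)\ge I^f_{\tr_E\tilde\rho}(K_1\otimes 1_2)$ furnished by \cite[Lemma 3.1]{CaiHansen:2010} applied to the regrouped bipartition $(\mathcal H_1\otimes\mathcal H_2)\otimes\mathcal H_E$, where $\tilde\rho=V(\rho\otimes\proj{e})V^\dagger$ and $\tr_E\tilde\rho=\Gamma(\rho)$. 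For step (a) I would establish the stronger equality $I^f_{\rho\otimes\proj{e}}(K_1\otimes 1_2\otimes 1_E)=I^f_\rho(K_1\otimes 1_2)$ through the variance difference $I^f_\rho(A)=\Var_\rho(A)-\Var^{\tilde f}_\rho(A)$: the ordinary variance is manifestly unchanged, and in the $\tilde f$-variance the matrix elements of $K_1\otimes 1_2\otimes 1_E$ linking the support of $\rho\otimes\proj{e}$ to its kernel vanish by orthogonality of $\ket{e}$ to its complement, while the purely kernel contributions are annihilated by $m_{\tilde f}(0,0)=0$, leaving precisely $\Var^{\tilde f}_\rho(K_1\otimes 1_2)$.

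The hard part will be step (a): the dilated state $\rho\otimes\proj{e}$ is never faithful, so the commutator definition of $I^f$ (which presupposes $\rho>0$) cannot be used directly, and the quantity must be read through its boundary-extended form, i.e. the continuous extension of $m_{\tilde f}$ to $[0,+\infty)\times[0,+\infty)$ with $m_{\tilde f}(0,0)=0$ set up earlier. Once this accounting of the zero eigenvalues introduced by the ancilla is carried out carefully, the three steps compose into the chain $I^f_\rho(K_1\otimes 1_2)=I^f_{\rho\otimes\proj{e}}(K_1\otimes 1_2\otimes 1_E)=I^f_{\tilde\rho}(K_1\otimes 1_2\otimes 1_E)\ge I^f_{\Gamma(\rho)}(K_1\otimes 1_2)$, and the infimum argument of the first paragraph then concludes the proof.
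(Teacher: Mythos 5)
Your argument is correct, and its skeleton is the same as the paper's: Stinespring dilation of the channel acting on the non-measured factor, invariance of $I^f$ under adjoining an ancilla, invariance under a unitary that fixes the observable $K_1\otimes 1_2\otimes 1_E$, contraction under partial trace via \cite[Lemma 3.1]{CaiHansen:2010}, and finally the infimum structure of the $f$-LQU (your reduction, fixing $K_1$ and taking infima on both sides, even avoids the paper's implicit appeal to attainment of the minimum by a $\tilde K_1$). The genuine difference is the ancilla: the paper amplifies with the maximally mixed state $\frac{1}{d_3}1_3$, which keeps the dilated state faithful, whereas you amplify with the pure state $\proj{e}$. This matters more than it might appear, because any map of the form $\sigma\mapsto\tr_3\bigl(U_{23}(\sigma\otimes\frac{1}{d_3}1_3)U_{23}^\dagger\bigr)$ sends $1_2$ to $1_2$ and is therefore unital; so the paper's dilation, as written, only realizes unital channels on system 2, while your pure-ancilla dilation realizes every CPTP map. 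Your route is thus the one that proves the stated theorem in full generality, and the price you pay --- that $\rho\otimes\proj{e}$ is never faithful --- is handled correctly by reading $I^f$ through $I^f_\rho(A)=\Var_\rho(A)-\Var^{\tilde f}_\rho(A)$, observing that the support-to-kernel matrix elements of $K_1\otimes 1_2\otimes 1_E$ vanish and that the kernel-kernel terms carry the weight $m_{\tilde f}(0,0)=0$. One loose end to tie up: in step (c) you apply \cite[Lemma 3.1]{CaiHansen:2010} to the non-faithful state $V(\rho\otimes\proj{e})V^\dagger$, so you should either note that the lemma holds for arbitrary states or insert the routine approximation $(1-\epsilon)\tilde\rho+\epsilon d^{-1}1$ together with continuity of $\rho\mapsto I^f_\rho(A)$ for regular $f$; the same remark is needed in any case, since the output state $(1_1\otimes\Phi_2)\rho_{12}$ can fail to be faithful even when $\rho_{12}$ is faithful, a point the paper's own proof also passes over in silence.
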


\begin{proof}
Let $ \tilde K_1 $ be the local observable minimizing the metric adjusted skew information. 
A completely positive trace preserving map $ \Phi_2 $ on system 2 is obtained as an amplification followed by a partial trace (Stinespring dilation): $\tr_3(U_{23}\rho_{23}U_{23}^\dagger)= \Phi_2\, \rho_2$. The metric adjusted LQU  is invariant under local unitaries. Also, the metric adjusted skew information is contractive under partial trace. Calling $d_3$ the dimension of the Hilbert space of the ancillary system 3, one has 
\begin{align*}
\mathcal U_1^{\Lambda,f}(\rho_{12}) =&\,  I^f\left(\rho_{12}, \tilde K_1\otimes 1_{2}\right) =I^f\left(\rho_{12} \otimes  \frac1{d_3} 1_3, \tilde K_1\otimes 1_{23}\right)  \\
=&\,\ I^f\left((1_1\otimes U_{23})\left(\rho_{12}\otimes \frac1{d_3} 1_3\right)(1_1\otimes U_{23}^\dagger),\tilde K_1\otimes 1_{23}\right)  \\
\ge&\, I^f\left(\tr_3\left\{(1_1\otimes U_{23})\left(\rho_{12}\otimes \frac1{d_3} 1_3\right)(1_1\otimes U_{23}^\dagger)\right\},\tilde K_1\otimes 1_{2}\right)\\
=&\,I^f\left((1_1\otimes \Phi_2)\rho_{12}, \tilde K_1\otimes 1_{2}\right)\\
\ge&\, \mathcal U_1^{\Lambda,f}\bigl((1_1\otimes \Phi_2)\rho_{12}\bigr),
\end{align*}
as desired.
\end{proof}

\begin{Theorem}
The metric adjusted LQU reduces to an entanglement monotone for pure states.
\end{Theorem}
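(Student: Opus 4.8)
The plan is to show that on pure states the $f$-LQU loses its dependence on $f$ and reduces to a unitarily invariant concave function of the reduced density matrix, so that it falls under Vidal's characterization of pure-state entanglement monotones. The starting point is the equality case of the bound $0\le I_\rho^f(A)\le\Var_\rho(A)$ recorded above, which gives $I_\rho^f(A)=\Var_\rho(A)$ for every regular $f$ whenever $\rho$ is pure. For a local observable and a pure state $\rho=\proj{\psi}$ one has
\[
\Var_\rho(K_1\otimes 1_2)=\tr(\rho_1 K_1^2)-\bigl(\tr \rho_1 K_1\bigr)^2=\Var_{\rho_1}(K_1),\qquad \rho_1=\tr_2\rho,
\]
so the infimum defining the $f$-LQU becomes
\[
\mathcal U_1^{\Lambda,f}(\rho)=\inf\{\Var_{\rho_1}(K_1)\mid K_1\text{ has spectrum }\Lambda\},
\]
a quantity that depends only on $\rho_1$ and on the fixed spectrum $\Lambda$ and, in particular, is the same for all regular $f.$

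First I would record the two structural properties of the functional $\rho_1\mapsto\mathcal U_1^{\Lambda,f}(\rho)$ needed for the characterization. Unitary invariance follows from $\Var_{U\rho_1 U^\dagger}(K_1)=\Var_{\rho_1}(U^\dagger K_1 U)$ together with the observation that $U^\dagger K_1 U$ again has spectrum $\Lambda;$ hence the infimum is unchanged and the quantity is a symmetric function of the Schmidt coefficients (the eigenvalues of $\rho_1$). Concavity in $\rho_1$ follows because, for fixed $K_1,$ the map $\rho_1\mapsto\tr(\rho_1 K_1^2)$ is linear while $\rho_1\mapsto(\tr\rho_1 K_1)^2$ is convex (the square of a linear functional), so $\rho_1\mapsto\Var_{\rho_1}(K_1)$ is concave; a pointwise infimum of concave functions is again concave.

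Finally I would invoke Vidal's theorem: a function of bipartite pure states expressible as a unitarily invariant concave function of the reduced density matrix is an entanglement monotone, since such functions correspond to symmetric concave (hence Schur-concave) functions of the Schmidt vector and therefore do not increase on average under LOCC. The two properties just established place $\mathcal U_1^{\Lambda,f}$ in this class, and the conclusion is independent of the regular $f$ chosen, in agreement with the fact that on pure states the $f$-LQU collapses onto the LQU. The hard part will be the last step: one must confirm that the infimum is attained and varies continuously with $\rho_1$ so that the monotone characterization genuinely applies, and---if a self-contained argument is preferred to citing the eigenvalue reformulation---derive the averaged LOCC-monotonicity directly from the concavity and unitary invariance of the reduced-state functional.
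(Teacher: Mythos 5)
Your proof is correct and follows the same route as the paper's own proof: both rest on the equality $I_\rho^f(A)=\Var_\rho(A)$ for pure $\rho$ (Theorem 3.8 of \cite{Hansen:2006b}), which collapses the $f$-LQU, independently of the regular $f$ chosen, onto the minimum local variance over observables with spectrum $\Lambda$. The difference lies in the second half: the paper disposes of it in one line by citing \cite{GTA:2013} for the fact that the minimum local variance is an entanglement monotone on pure states, whereas you unpack that fact --- rewriting $\Var_\rho(K_1\otimes 1_2)=\Var_{\rho_1}(K_1)$, checking that the infimum is a unitarily invariant and concave function of $\rho_1$ (linearity of $\rho_1\mapsto\tr\rho_1K_1^2$, convexity of the squared mean, infimum of concave functions), and invoking Vidal's characterization of pure-state monotones. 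This is essentially the argument contained in the cited reference, so what you have produced is a self-contained expansion of the paper's citation rather than a genuinely different proof; that is a gain in completeness at the cost of importing Vidal's theorem as the external ingredient instead. One remark: the step you flag as ``the hard part'' is in fact routine, since the set of observables with spectrum $\Lambda$ is the unitary orbit of $\mathrm{diag}(\Lambda)$, hence compact, so the infimum of the jointly continuous variance is attained and depends continuously on $\rho_1$; no further work is needed for Vidal's criterion to apply.
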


\begin{proof}
The metric adjusted $ f $-LQU  coincides with the standard variance on pure states, that is
\[
I_\rho^f(A)=\Var_\rho(A)=\tr\rho A^2 - (\tr\rho A)^2
\]
whenever $ \rho $ is pure \cite[Theorem 3.8]{Hansen:2006b}.  But in  \cite{GTA:2013}
it has been proved that the minimum local variance is an entanglement monotone for pure states.
\end{proof}

\section{Conclusion}
In this work, we have built a unifying information-geometric framework to quantify quantum correlations in terms of metric adjusted skew informations. We extended the physically meaningful definition of LQU to a more general class of information measures. Crucially, metric adjusted quantum correlation quantifiers enjoy, by construction, a set of desirable properties which make them robust information measures.\\
An important open question is whether information geometry methods may help characterize many-body quantum correlations. In general, the very  concept of multipartite statistical dependence is not fully grasped in the quantum scenario. In particular, we
do not have axiomatically consistent and operationally meaningful measures of genuine multipartite quantum
discord.  Unfortunately, the LQU and IP cannot be straightforwardly generalized to capture joint properties of more than two quantum particles. A promising starting point could be to translate into the information-geometry language the  entropic multipartite correlation measures developed in \cite{weaving}. We plan to investigate the issue in future studies.\\[1ex]

\vspace{6pt}

\acknowledgments{This research is  supported by a Rita Levi Montalcini Fellowship of
the Italian Ministry of Research and Education (MIUR), grant
number 54$\_$AI20GD01.}

 



\end{document}